\PassOptionsToPackage{unicode}{hyperref}
\PassOptionsToPackage{hyphens}{url}
\PassOptionsToPackage{dvipsnames,svgnames,x11names}{xcolor}
\documentclass[
  12pt]{article}

\usepackage{amsmath,amssymb}
\usepackage{iftex}
\ifPDFTeX
  \usepackage[T1]{fontenc}
  \usepackage[utf8]{inputenc}
  \usepackage{textcomp} 
\else 
  \usepackage{unicode-math}
  \defaultfontfeatures{Scale=MatchLowercase}
  \defaultfontfeatures[\rmfamily]{Ligatures=TeX,Scale=1}
\fi
\usepackage{lmodern}
\ifPDFTeX\else  
\fi
\IfFileExists{upquote.sty}{\usepackage{upquote}}{}
\IfFileExists{microtype.sty}{
  \usepackage[]{microtype}
  \UseMicrotypeSet[protrusion]{basicmath} 
}{}
\makeatletter
\@ifundefined{KOMAClassName}{
  \IfFileExists{parskip.sty}{%
    \usepackage{parskip}
  }{
    \setlength{\parindent}{0pt}
    \setlength{\parskip}{6pt plus 2pt minus 1pt}}
}{
  \KOMAoptions{parskip=half}}
\makeatother
\usepackage{xcolor}
\makeatletter
\ifx\paragraph\undefined\else
  \let\oldparagraph\paragraph
  \renewcommand{\paragraph}{
    \@ifstar
      \xxxParagraphStar
      \xxxParagraphNoStar
  }
  \newcommand{\xxxParagraphStar}[1]{\oldparagraph*{#1}\mbox{}}
  \newcommand{\xxxParagraphNoStar}[1]{\oldparagraph{#1}\mbox{}}
\fi
\ifx\subparagraph\undefined\else
  \let\oldsubparagraph\subparagraph
  \renewcommand{\subparagraph}{
    \@ifstar
      \xxxSubParagraphStar
      \xxxSubParagraphNoStar
  }
  \newcommand{\xxxSubParagraphStar}[1]{\oldsubparagraph*{#1}\mbox{}}
  \newcommand{\xxxSubParagraphNoStar}[1]{\oldsubparagraph{#1}\mbox{}}
\fi
\makeatother

\usepackage{longtable,booktabs,array}
\usepackage{calc} 
\usepackage{etoolbox}
\makeatletter
\patchcmd\longtable{\par}{\if@noskipsec\mbox{}\fi\par}{}{}
\makeatother
\IfFileExists{footnotehyper.sty}{\usepackage{footnotehyper}}{\usepackage{footnote}}
\makesavenoteenv{longtable}
\usepackage{graphicx}
\makeatletter
\def\maxwidth{\ifdim\Gin@nat@width>\linewidth\linewidth\else\Gin@nat@width\fi}
\def\maxheight{\ifdim\Gin@nat@height>\textheight\textheight\else\Gin@nat@height\fi}
\makeatother
\setkeys{Gin}{width=\maxwidth,height=\maxheight,keepaspectratio}
\makeatletter
\def\fps@figure{htbp}
\makeatother

\makeatletter
\@ifpackageloaded{caption}{}{\usepackage{caption}}
\AtBeginDocument{%
\ifdefined\contentsname
  \renewcommand*\contentsname{Table of contents}
\else
  \newcommand\contentsname{Table of contents}
\fi
\ifdefined\listfigurename
  \renewcommand*\listfigurename{List of Figures}
\else
  \newcommand\listfigurename{List of Figures}
\fi
\ifdefined\listtablename
  \renewcommand*\listtablename{List of Tables}
\else
  \newcommand\listtablename{List of Tables}
\fi
\ifdefined\figurename
  \renewcommand*\figurename{Figure}
\else
  \newcommand\figurename{Figure}
\fi
\ifdefined\tablename
  \renewcommand*\tablename{Table}
\else
  \newcommand\tablename{Table}
\fi
}
\@ifpackageloaded{float}{}{\usepackage{float}}
\floatstyle{ruled}
\@ifundefined{c@chapter}{\newfloat{codelisting}{h}{lop}}{\newfloat{codelisting}{h}{lop}[chapter]}
\floatname{codelisting}{Listing}

\makeatother
\makeatletter
\@ifpackageloaded{caption}{}{\usepackage{caption}}
\@ifpackageloaded{subcaption}{}{\usepackage{subcaption}}
\makeatother

\ifLuaTeX
  \usepackage{selnolig}  
\fi
\usepackage[]{natbib}
\usepackage{bookmark}

\usepackage{amsthm}
\usepackage[spaceRequire=false]{algpseudocodex}
\usepackage{algorithm}
\usepackage{setspace}
\usepackage{bbm}

\newcommand{\epsilonv}{\boldsymbol{\epsilon}}
\newcommand{\epsilonc}{\epsilon}

\newcommand{\diag}{\mathbf{di}}

\newcommand{\pindext}{I^*}
\newcommand{\pindex}{I}
\newcommand{\pichain}{\overline{\pindex}}

\newcommand{\qoi}{\theta}

\newcommand{\np}{\psi}
\newcommand{\npset}{\psi}
\newcommand{\npi}{\npset_{\pindex}}
\newcommand{\npii}[1]{\npset_{\pindex_{#1}}}
\newcommand{\npscalar}{\psi_i}
\newcommand{\npsetl}{\check{\npset}}
\newcommand{\nptotal}{\npset}

\newcommand{\node}{\nu}
\newcommand{\nodel}{\check{\nu}}
\newcommand{\nodepath}{\overline{\node}}

\newcommand{\params}{\pi}
\newcommand{\data}{y}
\newcommand{\datam}{\boldsymbol{y}}
\newcommand{\datac}{y}
\newcommand{\datapot}{\widetilde{y}}
\newcommand{\datapotm}{\boldsymbol{\datapot}}
\newcommand{\datapotset}{\widetilde{y}}
\newcommand{\pdim}{D}
\newcommand{\pidim}{P}
\newcommand{\dpdim}{M}
\newcommand{\pathset}{\mathcal{P}}

\newcommand{\toynp}{\phi}
\newcommand{\toynpset}{\phi}
\newcommand{\toyqoi}{\theta}

\newcommand{\lfload}{\boldsymbol{\lambda}}
\newcommand{\lfloadnewc}{\widetilde{\lambda}}

\newcommand{\lffac}{\boldsymbol{\phi}}
\newcommand{\lfloadc}{\lambda}
\newcommand{\lfscale}{\mathbf{S}}
\newcommand{\lfcorfac}{\mathbf{R}}
\newcommand{\lfscalec}{S}
\newcommand{\lfcorfacc}{R}
\newcommand{\lffacc}{\phi}
\newcommand{\lffacnewc}{\widetilde{\phi}}
\newcommand{\lffacnew}{\boldsymbol{\lffacnewc}}
\newcommand{\effect}{\boldsymbol{\overline{\delta}}}
\newcommand{\effectc}{\overline{\delta}}
\newcommand{\effects}{\delta}

\newcommand{\pmmean}{\boldsymbol{\mu}}
\newcommand{\pmmeanc}{\mu}
\newcommand{\biasp}{\beta}

\newcommand{\E}{\mathbb{E}}
\newcommand{\V}{\mathrm{Var}}
\newcommand{\ind}{\mathrel{\perp\!\!\!\!\!\:\perp}}
\newcommand{\minsep}[2]{S^*_M\left(#1;#2\right)}

\newcommand{\Vhat}[1]{\mathrm{var}\left( #1 \right)}
\newcommand{\CVhat}[3]{\mathrm{cvar}_{#1,#2}\left( #3 \right)}
\newcommand{\Ehat}[3]{\hat{E}_{#1}\left( #2, #3_{#1} \right)}

\newcommand{\RR}{\mathbb{R}}

\newcommand{\branchplain}[2]{\underset{#1 \supset #2}{\mathrm{Branch}}}
\newcommand{\branch}[3]{\branchplain{#1}{#2}(#3)}
\newcommand{\subdivide}[4]{\underset{#1 \supset #2 \to #3}{\mathrm{Divide}}(#4)}
\newcommand{\merge}[4]{\underset{#2\leftarrow #1\rightarrow #3}{\mathrm{Merge}}(T)}

\newcommand{\STrees}{{Variance deltas} }

\newcommand{\TTrees}{{Variance Deltas} }
\newcommand{\Tree}{{variance delta} }
\newcommand{\Trees}{{variance deltas} }
\newcommand{\TreeNS}{{variance delta}}
\newcommand{\TreesNS}{{variance deltas}}

\newcommand{\DTree}{distributary tree }
\newcommand{\DTreeNS}{distributary tree}
\newcommand{\DTrees}{distributary trees }
\newcommand{\DTreesNS}{distributary trees}

\newcommand{\DDTrees}{Distributary Trees }
\newcommand{\DDTreesNS}{Distributary Trees}
\newcommand{\DDTree}{Distributary Tree }

\theoremstyle{plain}
\newtheorem{lemma}{Lemma}

\theoremstyle{plain}

\theoremstyle{definition}
\newtheorem{definition}{Definition}

\IfFileExists{xurl.sty}{\usepackage{xurl}}{} 
\hypersetup{
  pdftitle={Title},
  pdfauthor={Author 1; Author 2},
  pdfkeywords={3 to 6 keywords, that do not appear in the title},
  colorlinks=true,
  linkcolor={blue},
  filecolor={Maroon},
  citecolor={Blue},
  urlcolor={Blue},
  pdfcreator={LaTeX via pandoc}}

\newcommand{\anon}{1}

\begin{document}

\def\spacingset#1{\renewcommand{\baselinestretch}%
{#1}\small\normalsize} \spacingset{1}


\if1\anon
{
  \title{\bf \TTrees for Visualizing and Explaining Posterior Uncertainty}
  \author{Collin Cademartori\\
    Department of Statistical Sciences, Wake Forest University\\\href{mailto:cademac@wfu.edu}{cademac@wfu.edu}}
  \maketitle
} \fi

\if0\anon
{
  \bigskip
  \bigskip
  \bigskip
  \begin{center}
    {\LARGE\bf \TTrees for Visualizing and Explaining Uncertainty}
\end{center}
  \medskip
} \fi

\bigskip
\begin{abstract}
In observational settings, where the data generating process and possibly the sample size are not controlled, it is typically impossible to guarantee a priori that quantities of interest will be estimated with sufficient precision. However, even when the data do not determine the quantities of interest, they may still allow determination of what is missing -- unobserved information which, if observed, would meaningfully reduce uncertainty. We propose an interactive visualization system, termed variance deltas, to enable the discovery of such missing information from a Bayesian posterior distribution. This system, which we provide as a software package, represents missing information as subsets of unobserved model quantities, organized into a tree based on how well each subset explains uncertainty about the quantity of interest. This system both automates the construction of candidate subsets from minimal user input and implements interactive operations for the division and combination of subsets, allowing the efficient discovery of interesting and useful explanations. We demonstrate this system by using it to discover nonobvious explanations of uncertainty for (1) a treatment effect parameter in a simulated causal inference problem and (2) a population proportion in a forecasting model of real polling data with many sources of bias.
\end{abstract}

\noindent%
{\it Keywords:} Uncertainty Quantification, Graphical Models, Bayesian Statistics, Sensitivity Analysis
\vfill

\newpage
\spacingset{1.8} 

\section{Introduction}\label{sec:intro}

When modeling observational data where the data generating process is both uncontrolled and complex, statistical uncertainty about quantities of interest may be too great to permit any substantive inference \citep{greenland_identification,greenland_bias_models}. Common causes of this symptom include having insufficient amounts of data, having data of the wrong type, and having a nonidentified nuisance parameter on which a quantity of interest depends. Remedies may include collecting more and/or different data, bringing prior information to bear, or rethinking the substantive goals of the analysis. 

Finding the correct explanation for such uncertainty is often necessary to properly mitigate or accommodate it. But discovering such explanations typically requires manual analysis of the model and substantial guesswork. Furthermore, existing summarization and visualization tools often do not contain the information that we need. In Bayesian inference, \citet{bayesian_workflow} point out that the usual histogram- and scatter plot-based posterior visualizations ``do not fully capture the multiple levels of variation and uncertainty in our inferences.'' 

We propose an interactive visualization system to structure and partially automate the search for such explanations in the Bayesian setting. The user of this system only needs to specify a small set of initial potential explanations, from which further explanations are automatically generated and organized into a tree structure. The branches of these trees may then be interactively divided, extended, and merged to produce new explanations. In these trees, we think of information as flowing from the root to the leaf explanations, with each successive explanation weaker than the last. Consequently, we term these visualizations \TreesNS, in analogy with the tree-like flow of a river delta. We implement this system in a software package, \texttt{variance-deltas}, which we use to produce all figures.

\subsection{Quantifying Explanations}
To visualize uncertainty explanations, we must first measure explanatory power. The literature on explainable and interpretable machine learning has proposed many measures for explaining model predictions \citep{explanations_hullman,xai_review}. The popular SHAP method, for instance, essentially performs a sensitivity analysis of predictions with respect to the predictive features \citep{shap_original,shapley_sensitivity}. We follow a similar approach for explaining uncertainty.

We begin with a posterior density $p(\nptotal \mid\data)$ for a vector of unknowns $\nptotal\in \RR^\pdim$ given observed data $\data\in\RR^N$. Let $\pindext=\{1,2,\ldots\pdim\}$. For any $\pindex=\{i_1,i_2,\ldots, i_k\}\subseteq \pindext$, define the vector
\[
  \npset_{\pindex} = \left( \nptotal_{i_1}, \nptotal_{i_2},\ldots, \nptotal_{i_k} \right)\in\RR^k.
\] 
The total index set $\pindext$ is partitioned into two disjoint subsets:
\[
\pindext = \pindex_{\params}\cup\pindex_{\datapot},\quad\pindex_{\params}\in\RR^{\pidim},\;\pindex_{\datapot}\in\RR^{\dpdim},
\] 
where $\pidim + \dpdim = \pdim$. The unknowns $\npset_{\pindex_{\params}}$ are the model parameters, and the unknowns $\npset_{\pindex_{\datapot}}$ are unobserved data which could hypothetically be observed. For notational simplicity, we will write $\params = \npset_{\pindex_{\params}}$ and $\datapot = \npset_{\pindex_{\datapot}}$. We also isolate a distinguished parameter index $i_{\qoi}\in \pindex_{\params}$ which corresponds to a scalar quantity of interest $\qoi$, i.e. $\qoi = \nptotal_{i_{\qoi}}$. (If there are multiple scalar quantities of interest, we assume that they are taken one at a time.)

In practice, we often begin with a posterior $p(\params\mid\data)$ over just the parameters. When $\dim\left( \datapotset \right)>0$, the probability model must then be extended over this additional data, as in Bayesian models of missing data (e.g. for imputation) or replicated data (e.g. for posterior predictive model checking). 
In some cases, this extension may require additional parameters as well, such as observation-level random effects, in which case these are also included in $\params$.

Potential explanations of our uncertainty about quantity of interest $\qoi$ are represented by vectors $\nptotal_{\pindex}$ with $\pindex\subseteq \pindext\setminus\{i_{\qoi}\}$.
The degree to which $\npi$ explains uncertainty about $\qoi$ -- the explanatory power of $\npi$ -- can be quantified by the expected posterior variance of $\qoi$ given $\npi$, as a fraction of the overall posterior variance:
\begin{equation}
  \label{eq:unc_index}
  U^2_{\pindex} = \frac{\E\left[ \V\left( \qoi\mid \data, \npi \right)\mid\data \right]}{\V\left( \qoi\mid\data \right)}.
\end{equation}
We refer to $U^2_{\pindex}$ as the uncertainty index of $\npi$. By the law of total variance, $0 \leq U^2_{\pindex}\leq 1$, and so $\npi$ is a powerful explanation if $U^2_{\pindex}\ll 1$. We can also write the uncertainty index as
\begin{equation}
  \label{eq:sobol_form}
  U^2_{\pindex} = 1 - \frac{\V\left[ \E\left( \qoi\mid \data, \npi \right)\mid\data \right]}{\V\left( \qoi\mid\data \right)}.
\end{equation}
The rightmost term is a form of the Sobol' index \citep{sobol_index}, which measures the sensitivity of the conditional mean $\E\left( \qoi\mid \data, \npi \right)$ to $\npi$. Thus, in our framework, locating powerful explanations $\npi$ involves analyzing the sensitivity of $\qoi$ to the components of $\nptotal$.

\subsection{Toy Example: Normal-Normal Model}\label{sec:toy_model}

We introduce \Trees with a simple hierarchical model.
Suppose that we obtain $N$ observations from each of $J$ groups, yielding data $\datam = \left(\datac_{ij}\right)_{i,j=1}^{N,J}$. Consider the following model:
\begin{gather}
    y_{1j},\ldots, y_{Nj} \stackrel{iid}{\sim} \mathrm{normal}(\toynp_j, \sigma_j)\text{ for } 1\leq j\leq J,\label{eq:toy_hierarchical_lik}\\
    \toynp_1,\ldots,\toynp_J \stackrel{iid}{\sim} \mathrm{normal}(\toyqoi, \tau),\quad
    \toyqoi \sim\mathrm{normal}(0, 1).\label{eq:toy_hierarchical_prior}
\end{gather}

We take the superpopulation mean $\toyqoi$ to be our quantity of interest. We treat $\tau$ and the $\sigma_j$ as fixed hyperparameters for simplicity. For this example, we simulate data $\data$ from \eqref{eq:toy_hierarchical_lik} with $N=2$ observations from each of $J=3$ groups. We then generate samples from the posterior $p(\toynp_1,\toynp_2,\toynp_3,\toyqoi\mid \datam)$. Table \ref{fig:toy_hyper} summarizes the choices of hyperparameters for these steps.

\begin{table}
  \centering
  \renewcommand{\arraystretch}{1.5}
  \begin{tabular}{|c|c|c|}
    \hline
    \multicolumn{3}{|c|}{$\tau=2^{-2}$}\\\hline
    $\sigma_1 = 2^{-1}$ & $\sigma_2 = 2^0$ & $\sigma_3 = 2^1$\\ \hline
    $\toynp_1 = -1/2\; (\ast)$ & $\toynp_2 = 0\; (\ast)$ & $\toynp_3 = 1/2\; (\ast)$\\ \hline
  \end{tabular}
  \caption{Hyperparameters for simulating data and posterior sampling from the model \eqref{eq:toy_hierarchical_lik} and \eqref{eq:toy_hierarchical_prior}. Quantities marked with $(\ast)$ are treated as random for posterior sampling.}
  \label{fig:toy_hyper}
\end{table}

The parameters $\params$ consist of the overall mean $\qoi$ and group-level means $\toynpset = \left(\toynp_1,\toynp_2,\toynp_3\right)$. The hypothetical data $\datapotm=\left(\datapot_{ij}\right)_{i,j=1}^{N,J}$ are then sampled from \eqref{eq:toy_hierarchical_lik} independently of $\datam$. Introducing these hypothetical data points allows us to understand the impact of the sample size $N$ on our uncertainty for $\toyqoi$. Overall, our unknowns are $\nptotal = \left[ \params, \datapotset \right]$, where $\left[ u,v \right]$ denotes the concatenation of the vectors $u$ and $v$.
 
A \Tree is a tree rooted at the quantity of interest $\qoi$ with each non-root node representing a possible explanation $\npi$ with $\pindex\subseteq \pindext\setminus\{i_{\qoi}\}$. 
Figure \ref{fig:toy_tree_ex} displays one possible \Tree for $\qoi$. We interpret this tree as follows:
\begin{itemize}
  \item Each node is labelled with the corresponding model quantities in $\npi$. If a quantity is multidimensional (e.g. a vector or matrix), then the indices of included components are indicated next to the quantity name. For instance, the group-level mean $\toynp_1$, which is the first component of the vector $\toynp$, is indicated with the label \includegraphics[height=0.02\textheight]{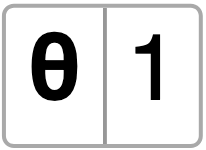}.
  \item The leaves correspond to the hypothetical data from each group $\datapot_j = \left( \datapot_{ij} \right)_{i=1}^N$. 
  \item The paths $\toyqoi\rightarrow \toynp_i\rightarrow \datapot_j$ reflect the hierarchical structure of the model. In particular, each path is a Markov chain in that $\toyqoi\ind\datapot_j\mid \toynp_j$.
  \item The horizontal position of a node containing explanation $\npi$ is given by the root uncertainty index $U_{\pindex}=\sqrt{U^2_{\pindex}}$ (where the square root is taken so that $U_{\pindex}$ is on the standard deviation scale). 
\end{itemize}

\begin{figure}
  \centering
  \includegraphics{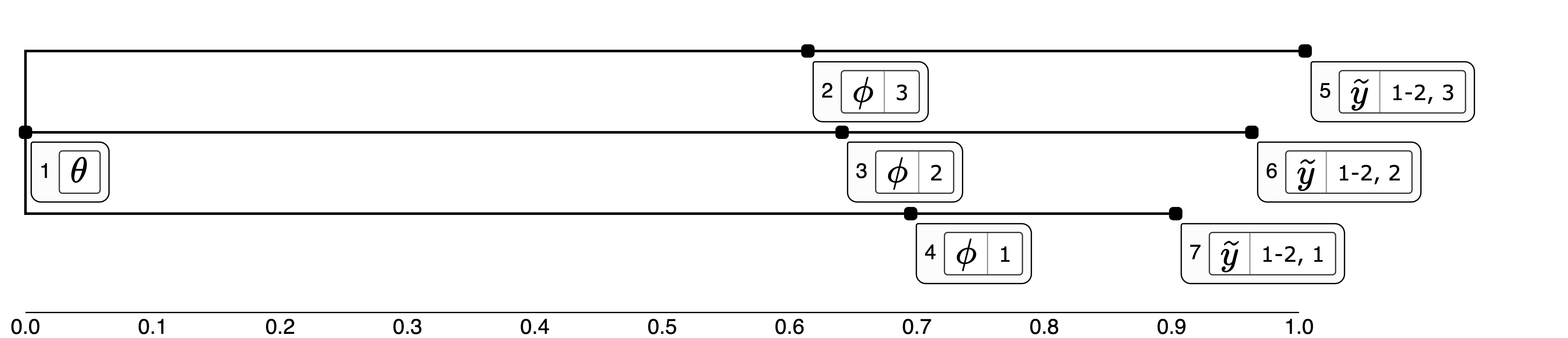}
  \caption{A \Tree constructed for the toy model given in \eqref{eq:toy_hierarchical_lik} and \eqref{eq:toy_hierarchical_prior}.}
  \label{fig:toy_tree_ex}
\end{figure}

This first tree shows that all three groups exert similar influence on our uncertainty about $\toyqoi$. 
In particular, exact knowledge of any one $\toynp_j$ cannot be expected to reduce posterior uncertainty about $\toyqoi$ to less than $\approx 60\%$ of its original level. This raises the question: how much more information would be obtained by learning \textit{all} group-level means at once. 

This question is answered by adding a node to our \Tree corresponding to the combined group-level means $\toynpset$. Figure \ref{fig:toy_tree_ex2} displays the result. Learning the entire vector $\toynpset$ reduces expected posterior uncertainty to about $40\%$ of its current level -- a substantial improvement. However, to reduce our uncertainty by more than half, this tree also suggests that we would need data from additional groups.

\begin{figure}
  \centering
  \includegraphics{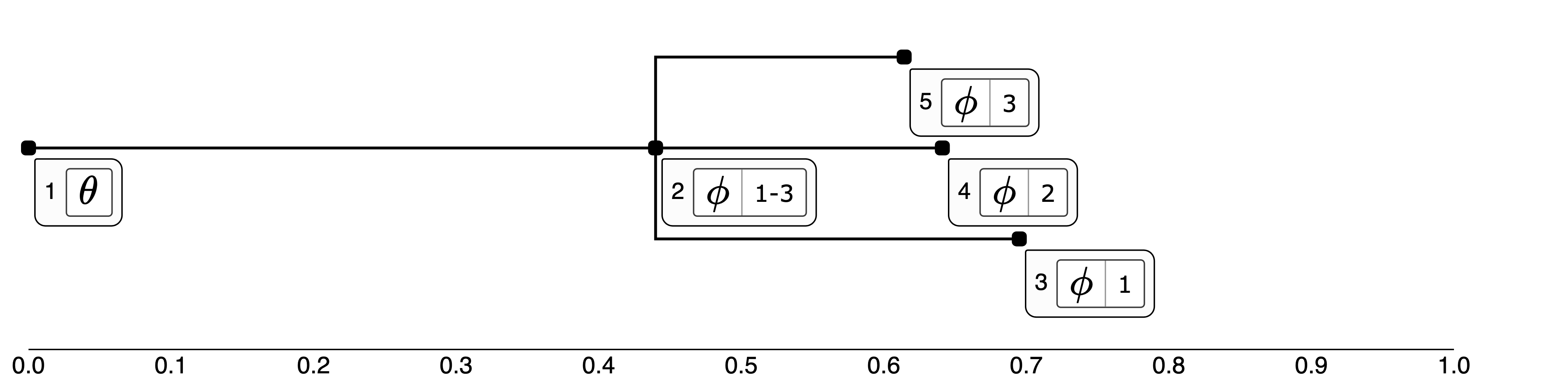}
  \caption{The tree from Figure \ref{fig:toy_tree_ex} with a new node added for the set $\toynpset = (\toynp_1,\toynp_2,\toynp_3)$.}.
  \label{fig:toy_tree_ex2}
\end{figure}

This example demonstrates how \Trees can both answer initial questions (``Which group is most important for our uncertainty about $\toyqoi$?'') and redirect attention to more important questions (``How much can we learn using data from only three groups?'').
In the more complex models of our two later examples, \Trees are used to discover explanations that are both more useful and harder to anticipate.

\subsection{Related Work}

This work follows a growing literature on Bayesian workflow, which addresses all stages of the statistical modeling pipeline in an integrated manner. 
Other applications of static or interactive graphics to problems in Bayesian workflow include the shinystan and bayesplot packages \citep{shinystan, bayesplot} -- tools for plotting MCMC outputs and diagnostics. Outside the Bayesian setting, graph-based visualizations have been used to reveal model structure and variable importance \citep{mv_bayes_nets,mv_variable_import}.

Previous work has also applied tools from sensitivity analysis to Bayesian workflow problems. Much of this work studies the sensitivity of inference to non-random aspects of the analysis, such as the prior specification \citep{sensitivity_prior1,sensitivity_prior2} and the data sample \citep{data_sensitivity}. Decision-theoretic applications, on the other hand, analyze sensitivity with respect to random, modeled quantities, as we do here. For instance, the expected value of partial perfect information is defined as the reduction in expected loss when a decision is made with and without conditioning on some unknown quantity \citep{evppi}. When the loss is quadratic, this is closely related to our uncertainty index \eqref{eq:unc_index} \citep{evppi_app}. Whereas such value of information analyses generally assume that we know which quantities we want to condition on in advance, this work is concerned with discovering interesting sets of unknowns to analyze.

The structure of our proposed visualizations is also inspired by John Tukey's early work on interactive statistical graphics \citep{tukey_cognostics}. In moving beyond direct visualization of posterior samples, we are influenced by Tukey's suggestion that, to invent useful new forms of visualization, we should ``find new phenomena of potential interest, and then learn how to make displays that will be likely to reveal them'' \citep{tukey_decades}. 

\subsection{Notation}
Whenever $S$ is a set, $\mathcal{P}(S)$ denotes the power set of $S$. We abuse notation slightly throughout this paper, using the same letter (e.g. $\npset$) for both a random variable (as in $\mathrm{Var}(\npset)$) and a possible value of that random variable (as in the density expression $p(\npset)$). Ambiguities are resolved by context or otherwise explicitly disambiguated.

We assume that model parameters $\npset$ and data $\data$ are vectors. Occasionally, we treat some components of the parameters and/or data as matrices, which we write with boldface characters. When we think of such a matrix as part of the vectors $\nptotal$ or $\data$, we assume the matrix is ``flattened'' into a vector (e.g. in column-major order, though this will not be specified). The $i^{\mathrm{th}}$ row or column of a matrix $\mathbf{A}$ is denoted $A_{i\cdot}$ and $A_{\cdot i}$, respectively.

\subsection{Outline}

The remainder of this paper is organized as follows. Section \ref{sec:post_trees} defines \TreesNS, presents an algorithm for generating them semi-automatically, and introduces a set of interactive operations for transforming them. The next two sections apply \Trees to find nontrivial explanations of uncertainty. Section \ref{sec:ex1_sc} applies our method to a simulated causal inference problem using panel data, where the quantity of interest is a treatment effect. Then, in Section \ref{sec:ex2_polls}, we examine a polling model applied to real data from the 2016 U.S. presidential election, where the quantity of interest is the Democratic share of the electorate on election day. Section \ref{sec:conclusion} presents concluding thoughts.

\section{\TTrees}\label{sec:post_trees}

\STrees are constructed from two components: the underlying tree which encodes conditional independence relationships in the model and a mapping from tree nodes to estimates of the corresponding uncertainty indices. The formal presentation of \Trees is divded into four parts:
(i) the definition of the \textit{\DTreesNS} underlying them, (ii) a method for estimating the uncertainty indices needed to plot them, (iii) an algorithm for semi-automatically constructing them, and (iv) the interactive operations that transform and extend them.

\subsection{\DDTrees}

A \Tree is a visualization of a tree rooted at the quantity of interest $\qoi$ where the non-root nodes represent possible explanations $\npi$ with $\pindex\subseteq\pindext\setminus\{i_{\qoi}\}$. Maintaining our hydrological analogy, we term these \DTreesNS \footnote{In a river delta, a distributary is an outflowing branch of the river which does not rejoin the main stream of the river.}. The paths in a \DTree are Markov chains (in a sense which we define below), and thus represent a diminishing flow of information from the start to the end node.

\begin{definition}[\DDTreesNS]
  \label{def:sensitivity_tree}
  Let $T = \left( \mathcal{N}, P\right)$ be a tree, where
  \begin{itemize}
    \item $\mathcal{N}$ is the set of tree nodes. Each node $\node\in\mathcal{N}$ is associated with a subset $\pindex(\node)\subseteq\pindext$. To simplify notation, we write $\npset(\node)$ for $\npset_{\pindex(\node)}$, the unknowns associated to $\node$.
    \item $P:\mathcal{N}\setminus\{\node_0\}\to\mathcal{N}$ maps each non-root node to its unique parent, where $\node_0$ is the root. The undirected graph with vertices $\mathcal{N}$ and edges $\{\{P({\node}), \node\}\}_{\node\in\mathcal{N}\setminus\{\node_0\}}$ must be acyclic and connected, so that $P$ yields a well-defined tree.
  \end{itemize}
  Then $T$ is a \DTree for quantity of interest $\qoi$ if it satisfies the following constraints.
  \begin{enumerate}
    \item (\textbf{Rooted at $\boldsymbol{\qoi}$}) We must have $\pindex(\node_0) = \{i_{\qoi}\}$.
    \item (\textbf{Markov Paths}) A path is a sequence $(\node_1,\ldots,\node_k)$ with $\node_i = P(\node_{i+1})$ for $1\leq i\leq k-1$. Every path with $k\geq 2$ must be a Markov chain in the sense that
    \begin{equation}
      \label{eq:markov_path_condition}
      \npset(\node_{\leq i}) \ind \npset(\node_{>i}) \mid \npset(\node_i) \quad\text{for all}\quad 1\leq i\leq k-1,
    \end{equation}
    where $\npset(\node_{\leq i}) = \npset_{I_{\leq i}}$ with $I_{\leq i} = \cup_{j=1}^i I(\node_j)$ and $\npset_{j>i}=\npset_{I_{> i}}$ with $I_{> i} = \cup_{j={i+1}}^k I(\node_j)$.
    \item (\textbf{Unique Children}) For any $\node, \node'\in\mathcal{N}$, if $P(\node) = P(\node')$, then $\pindex(\node) \neq \pindex(\node')$.
  \end{enumerate}
\end{definition}

As we will see momentarily, constraints 1 and 2 above are the basic properties that make \DTrees useful. Constraint 3 implies that distinct rooted paths in $T$ correspond to distinct Markov chains. Note also that, subject to constraint 3, $\pindex(\cdot):\mathcal{N}\to \mathcal{P}(\pindext)$ need not be injective, so the same model quantities $\npi$ can occur at different nodes of the same distributary tree.

\subsection{Visualizing and Estimating Uncertainty}

When we plot \DTrees in a \TreeNS, the horizontal position of each node $\node$ is determined by an estimate of its root uncertainty index $U_{\pindex(\node)}$, as in Figure \ref{fig:toy_tree_ex}. It follows from Definition \ref{def:sensitivity_tree} that $U_{\pindex(\node)}$ decreases from child to parent nodes.

\begin{lemma}
   Let $T$ be a \DTree. Then $U^2_{\pindex(\node)}$ decreases from child to parent nodes in $T$, and $U^2_{\pindex(\node_0)}=0$, where $\node_0$ is the root of $T$. 
\end{lemma}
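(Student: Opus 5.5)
The plan is to treat the two claims separately. The root claim is immediate: by constraint 1 of Definition \ref{def:sensitivity_tree}, $\pindex(\node_0)=\{i_{\qoi}\}$, so $\npset(\node_0)=\qoi$. Hence $\V(\qoi\mid\data,\qoi)=0$ almost surely and the numerator of \eqref{eq:unc_index} vanishes, giving $U^2_{\pindex(\node_0)}=0$. Here we assume $\V(\qoi\mid\data)>0$, as is implicit in the definition of the index.

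For monotonicity, fix a non-root node $\node$ with parent $\node'=P(\node)$. I will show $U^2_{\pindex(\node')}\leq U^2_{\pindex(\node)}$; since the denominators agree, it suffices to compare the expected conditional variances. Let $\node_0=\node_1,\ldots,\node_{k-1}=\node',\node_k=\node$ be the rooted path ending at $\node$. It satisfies the Markov path condition \eqref{eq:markov_path_condition}. Apply that condition at index $i=k-1$: the set $I_{\leq k-1}$ contains $i_{\qoi}$, so $\npset(\node_{\leq k-1})$ includes $\qoi$, and $\npset(\node_{>k-1})=\npset(\node)$. Marginalizing out the other components gives $\qoi\ind\npset(\node)\mid\npset(\node')$. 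All of this is conditional on $\data$, since the whole model is posterior.

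Next I use the tower property together with the independence to get
\[
\E\left(\qoi\mid\data,\npset(\node)\right)=\E\left[\E\left(\qoi\mid\data,\npset(\node'),\npset(\node)\right)\mid\data,\npset(\node)\right]=\E\left[\E\left(\qoi\mid\data,\npset(\node')\right)\mid\data,\npset(\node)\right].
\]
Write $m'=\E(\qoi\mid\data,\npset(\node'))$. Then $\E(\qoi\mid\data,\npset(\node))=\E(m'\mid\data,\npset(\node))$. By the conditional Jensen inequality, or equivalently the law of total variance applied to $m'$,
\[
\V\left[\E\left(\qoi\mid\data,\npset(\node)\right)\mid\data\right]\leq\V\left(m'\mid\data\right).
\]
Substituting into the Sobol' form \eqref{eq:sobol_form} yields $U^2_{\pindex(\node)}\geq U^2_{\pindex(\node')}$. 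Applying this inequality along every path gives the full monotonicity.

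The main subtlety is extracting the clean statement $\qoi\ind\npset(\node)\mid\npset(\node')$ from \eqref{eq:markov_path_condition}. The index sets of different nodes may overlap, so $\npset(\node_{\leq i})$ and $\npset(\node_{>i})$ can share coordinates. That is harmless: conditional independence of vectors implies conditional independence of sub-vectors, and the shared coordinates must then be functions of $\npset(\node')$. The case where $\node'$ is the root is trivial, because there $U^2$ equals $0$. The remaining steps are routine facts about conditional expectation, which hold provided $\qoi$ has a finite posterior second moment.
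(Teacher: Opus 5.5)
Your proof is correct and is essentially the paper's argument. Both extract $\qoi \ind \npset(\text{child}) \mid \npset(\text{parent})$ from the Markov path condition and then apply the law of total variance: you work on the Sobol' form \eqref{eq:sobol_form} via $\E(\qoi\mid\data,\npset(\node))=\E(m'\mid\data,\npset(\node))$, while the paper works on \eqref{eq:unc_index} by conditioning on parent and child jointly, and the slack is the same term $\E[\V(m'\mid\data,\npset(\node))\mid\data]\geq 0$ in both (your handling of overlapping index sets via decomposition is, if anything, more careful than the paper's).
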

\begin{proof}
  It follows directly from the definition \eqref{eq:unc_index} that $U^2_{\pindex(\node_0)}=U^2_{\{i_\qoi\}} = 0$. Next, suppose $\node = P(\node')$. Then by passing successively from child to parent nodes, we can construct a path beginning at $\node_0$ and terminating at $\node'$ which contains $\node$. Constraint 2 then implies that $\qoi\to \npset(\node)\to \npset(\node')$ is a Markov chain, and the law of total variance gives
  \begin{align*}
    \E\V\left( \qoi\mid \npset(\node') \right) &= \E\V\left( \qoi\mid \npset(\node), \npset(\node') \right) + \E\V\left( \E\left[ \qoi\mid \npset(\node), \npset(\node') \right] \mid \npset(\node')\right)\\
    &\geq \E\V\left( \qoi\mid \npset(\node), \npset(\node') \right)\\
    & = \E\V\left( \qoi\mid \npset(\node)\right),
  \end{align*}
  where the final equality follows by the conditional independence of $\npset(\node')$ and $\qoi$ given $\npset(\node)$. Plugging the first and last terms above into \eqref{eq:unc_index} immediately yields $U^2_{\pindex(\node')}\geq U^2_{\pindex(\node)}$.
\end{proof}

When working with approximate posterior samples $\mathcal{S} = \{\np^{(s)}\}_{s=1}^S$ (as produced by any MCMC algorithm), we estimate the denominator of \eqref{eq:unc_index} by the sample variance $\mathrm{var}\left( \mathcal{S} \right) = \frac{1}{S-1}\sum_{s=1}^S(\qoi^{(s)} - \bar{\qoi})^2$. To estimate the numerator of \eqref{eq:unc_index}, we use a data splitting strategy. Assuming $S$ is even, we use the first $S/2$ samples $\mathcal{S}_E$ to construct an estimator $\hat{E}_{\pindex}\left( \mathcal{S}_E,\cdot \right)$
of the conditional expectation $\E\left[ \qoi\mid\npi=\cdot, \data \right]$. Then, using the last $S/2$ samples $\mathcal{S}_V$, we estimate the expected conditional variance $\E\left[ \mathrm{Var}\left( \qoi\mid\data, \npi \right) \right]$ as
\begin{equation}
  \label{eq:unc_num_est}
  \CVhat{\pindex}{\hat{E}}{\mathcal{S}}= \frac{2}{S}\sum_{\npset^{(s)}\in \mathcal{S}_V}\left( \qoi^{(s)} - \Ehat{\pindex}{\mathcal{S}_E}{\npset^{(s)}} \right)^2.
\end{equation} 

With these, our estimate of the uncertainty index \eqref{eq:unc_index} is then just
\begin{equation}
  \label{eq:unc_est}
  \hat{U}^2_{\pindex} = \frac{\CVhat{\pindex}{\hat{E}}{\mathcal{S}}}{\Vhat{\mathcal{S}}}.
\end{equation}
Splitting the data in this way eliminates overfitting bias in the conditional variance estimate \eqref{eq:unc_num_est}. In particular, if the expected mean squared error (MSE) of $\hat{E}$ vanishes asymptotically, then \eqref{eq:unc_num_est} is asymptotically unbiased for the conditional variance (proof in supplementary materials). In most situations, an $\hat{E}$ with vanishing MSE will not be available, and \eqref{eq:unc_num_est} will be biased upward by the expected MSE. However, we find in our experiments that, as long as $S$ is large enough that the variance of $\hat{E}$ small, the performance of \eqref{eq:unc_est} is sufficient.

The construction \eqref{eq:unc_est} still requires us to choose an estimator $\hat{E}$. Throughout our examples, we take 
$\hat{E}_{\pindex}\left( \mathcal{S}_E,\cdot \right)$ to be a random forest regressor, chosen primarily for its ability to handle nonlinearities and interactions without explicit modeling. However, we emphasize that \eqref{eq:unc_est} can be used with any choice of $\hat{E}$.

\subsection{Constructing \DDTreesNS}

Constructing useful \DTrees faces two difficulties, which we address in this section.
\begin{enumerate}
  \item Constructing \DTrees manually requires the user to know both (i) which unknowns $\npi$ should be selected for the nodes of the tree and (ii) how to connect the nodes so that the Markov condition \eqref{eq:markov_path_condition} is satisfied for every path in the tree. We resolve this problem by defining an algorithm to construct \DTrees from a specification of just the root node (the quantity of interest) and the leaf nodes.
  \item For any given model, the root and leaf nodes alone usually do not determine a unique valid \DTreeNS. This is an instance of a more general problem that arises in sensitivity analysis: the need to constrain the subsets of quantities with respect to which the analysis will be performed \citep{sensitivity_subspaces,sobol_group_inputs,global_sens_rev}. Therefore, we need an additional criterion that allows us to choose between multiple options for constructing the tree at any given stage of the algorithm. For this, we introduce a measure of explanatory specificity.
\end{enumerate}

To resolve these difficulties, we need the concepts of factorizations and factor graphs.

\begin{definition}[Factorization]
  Let $h(\nptotal)$ be a nonnegative function. Then $h$ factorizes over $\mathcal{F}\subseteq\mathcal{P}(\pindext)$ (equivalently, $\mathcal{F}$ is a factorization of $h$), if, for functions $f_{\pindex}:\RR^{|\pindex|}\to\RR_+$,
  \begin{equation}
    \label{eq:post_fac}
    h(\nptotal) = \prod_{\pindex\in\mathcal{F}} f_{\pindex}(\npset_{\pindex}).
  \end{equation}

  Suppose $h(\nptotal)=p(\nptotal\mid\data)$, the posterior density for a joint model $p(\nptotal, \data)$, and that $p$ factorizes over $\mathcal{F}$. Then $\mathcal{F}$ is a Bayesian factorization if $\mathcal{F} = \mathcal{F}_L\cup\mathcal{F}_P$ with $\mathcal{F}_L\cap\mathcal{F}_P=\emptyset$, and where:
  \begin{enumerate}
    \item The prior $p(\params)$ factorizes over $\mathcal{F}_P$.
    \item The likelihood (for observed and hypothetical data) $p(\data, \datapot\mid \params)$ factorizes over $\mathcal{F}_L$.
  \end{enumerate}
\end{definition}

A factorization $\mathcal{F}$ can be conveniently represented using an undirected graph.
\begin{definition}[Undirected Graph]
  An undirected graph $M$ is a pair $(V_M, E_M)$, where $V_M$ is a set of vertices and $E_M$ is a set of undirected edges. Specifically:
  \[
    E_M \subseteq \left\lbrace \{x,y\} \;\vert\; x,y\in V_M, x\neq y \right\rbrace.
  \]
  An undirected graph is bipartite if the vertices can be partitioned as $V_M=V_1\cup V_2$, where $V_1\cap V_1 = \emptyset$, and
  \[
    E_M \subseteq \left\lbrace \{x,y\} \;\vert\; x\in V_1,y\in V_2 \right\rbrace,
  \]
  i.e. all edges connect a vertex in $V_1$ with a vertex in $V_2$.
\end{definition}

\begin{definition}[Factor Graph]
  The factor graph $G_{\mathcal{F}}$ corresponding to a factorization $\mathcal{F}$ is defined as
  \[
    V_{G_{\mathcal{F}}} = \pindext \cup \mathcal{F},\quad E_{G_{\mathcal{F}}} = \left\{ \left\{ i, \pindex \right\} \;\Big\vert\; \pindex\in\mathcal{F}, i\in\pindex \right\}.
  \]
  In words, $G_{\mathcal{F}}$ is a bipartite graph with a vertex for every factor $\pindex\in\mathcal{F}$ and index $i\in\pindext$, with an edge between $i$ and $\pindex$ if and only if $i\in\pindex$.
\end{definition}

\subsubsection{Measuring the Specificity of an Explanation}

If we want to minimize $U^2_{\pindex}$, the best we can do is to observe all unknown quantities other than $\qoi$. In other words, we have
\begin{equation}
  \label{eq:uncertainty_index_min}
  \pindext\setminus\{i_{\qoi}\} \in \underset{\pindex\subseteq \pindext\setminus\{i_{\qoi}\}}{\mathrm{arg}\min}\; U^2_{\pindex}.
\end{equation}
Despite its maximal explanatory power, $\npset_{\pindext\setminus\{\qoi\}}$ cannot tell us which quantities \textit{specifically} explain our uncertainty about $\qoi$ and is thus not a useful explanation. To generate trees containing potentially useful explanations, we need to exclude explanations that are overly non-specific. This also provides a useful criterion to resolve the second difficulty above: When faced with multiple choices for the next node in a tree, we can always choose the node $\node$ with more specific explanation $\npset(\node)$. We quantify specificity in terms of the distance between the unknowns $\npset(\node)$ and likelihood factors in the posterior factor graph.

Let $\mathcal{F}$ be a Bayesian factorization of the posterior density $p(\nptotal\mid \data)$ with $G_{\mathcal{F}}$ the corresponding factor graph. 
We start with a few desiderata:
\begin{enumerate}
  \item If $\pindex'\subseteq \pindex$, then $\npset_{\pindex'}$ must be at least as specific as $\npset_{\pindex}$.
  \item An explanation $\npi$ is maximally specific if the quantities in $\npi$ are involved in exactly one likelihood factor, i.e. if there is exactly one $\pindex'\in\mathcal{F}_L$ with $\pindex\subseteq \pindex'$. These explanations point to a particular quantity of (hypothetically) observable information which is insufficient or missing.
  \item Conversely, $\npi$ is minimally specific if $\npi$ is equally involved in modeling every data point, as with a global mean or scale parameter. Such explanations do not suggest particular remedies, since \textit{any} additional data could be relevant to learning $\npi$.
\end{enumerate}

Our specificity measure interpolates these extremes by measuring the distance between explanations $\npi$ and likelihood factors in $G_\mathcal{F}$.
A path in $G_{\mathcal{F}}$ is any sequence of (distinct) vertices $\overline{v} = (v_1,v_2,\ldots,v_k)$ where $(v_j,v_{j+1})$ are joined by an edge for all $1\leq j\leq k-1$. Let $\pathset_k(G_{\mathcal{F}})$ be the set of all paths in $G_{\mathcal{F}}$ of length $k$. For each $i\in \pindext$, the depth $d_{\mathcal{F}}(i)$ of $i$, is defined as the length of the shortest path from $i$ to a likelihood factor. Symbolically:
\begin{equation}
  \label{eq:param_depth}
  d(i) = \min_{f\in\mathcal{F}_L} d(i, f), \text{ where } d(i,f) = \min\left\lbrace k\;\Big\vert\; \exists \overline{v}\in \pathset_k(G_{\mathcal{F}})\;s.t.\; \overline{v}_1=i, \overline{v}_{k}=f \right\rbrace.
\end{equation}

Similarly, the set of most relevant likelihood factors for $i$ is defined as
\begin{equation}
  \label{eq:most_rel}
  \mathcal{F}_L(i) = \left\lbrace f\in \mathcal{F}_L\mid d(i,f) = d(i)  \right\rbrace.
\end{equation}
Finally, for any index set $\pindex\subseteq\pindext$, the likelihood specificity is defined as
\begin{equation}
  \label{eq:lik_complexity}
  S_{\mathrm{lik}}(\pindex) = \frac{1}{ |\mathcal{F}_L| } \left|\bigcup_{i\in\pindex }\mathcal{F}_L(i)\right|,
\end{equation}

In words, sets $\pindex$ with low $S_{\mathrm{lik}}(\pindex)$ index unknowns $\npi$ which are most relevant to a relatively small set of likelihood factors. Furthermore, it is easy to see that likelihood specificity satisfies the above three desiderata. Note in particular that $S_{\mathrm{lik}}(\pindex)$ is minimized when $\npi$ occurs in exactly one likelihood factor, and is maximized when some $\npset_i$ with $i\in\pindex$ is a global quantity, in the sense that $d(f,i)=d(i)$ for all $f\in\mathcal{F}_L$. 

\subsubsection{Generating \DDTrees From Leaves}

Given our measure of specificity $S_{\mathrm{lik}}$, we may now define a procedure for generating \DTreesNS. To do this, we need to leverage information about the dependencies between model quantities, which can be derived from the factor graph $G_{\mathcal{F}}$. Specifically, for a factor graph $G_{\mathcal{F}}$, define the Markov graph $M_{\mathcal{F}}$ with vertices $V_{M_{\mathcal{F}}} = \pindext$ and edges
\[
  E_{M_{\mathcal{F}}} = \left\lbrace \{i_1,i_2\} \;\Big\vert\; \exists \pindex\in\mathcal{F} \text{ s.t. } i_1,i_2\in \pindex \right\rbrace.
\]
In other words, we connect two indices in $M_{\mathcal{F}}$ if they belong to a common factor in $\mathcal{F}$. We henceforth assume that the factor graph $G_{\mathcal{F}}$ is connected, which implies also that the Markov graph $M_{\mathcal{F}}$ is connected. If $M_{\mathcal{F}}$ were not connected, then we could partition $\pindext$ as $\pindext=\pindex_1\cup\pindex_2$ with $\pindex_1$ and $\pindex_2$ disconnected in $M_{\mathcal{F}}$ and with $i_{\qoi}\in\pindex_1$. It is then easy to show that $\npset_{\pindex_1}\ind \npset_{\pindex_2}$, and hence $U^2_{\pindex_2} = 1$. In other words, $\npset_{\pindex_2}$ is irrelevant to $\qoi$, so if $M_{\mathcal{F}}$ were disconnected, we could replace $M_{\mathcal{F}}$ by the connected component of $i_{\qoi}$ and proceed without loss of generality.

A subset $\pindex'\subseteq \pindext$ separates disjoint sets $\pindex_1,\pindex_2\subseteq \pindext\setminus \pindex'$ if every path from $i\in\pindex_1$ to $j\in\pindex_2$ in $M_{\mathcal{F}}$ includes some $k\in\pindex'$. If $p(\nptotal\mid\data)$ factorizes over $\mathcal{F}$, then standard results for graphical models tell us that if $\pindex'$ separates $\pindex_1$ and $\pindex_2$ in $M_{\mathcal{F}}$, then $\npset_{\pindex_1}\ind\npset_{\pindex_2}\mid\npset_{\pindex'}$. In other words, we can identify conditional independence relationships in $p(\nptotal\mid\data)$ by locating separating sets in $M_{\mathcal{F}}$. We refer the reader to \cite{graphical_handbook} for details on graphical models.

This suggests a procedure for constructing a \DTree from a root node $\node_0$ and a set of leaf nodes $\{\nodel_i\}_{i=1}^l$. In particular, by iteratively searching $M_{\mathcal{F}}$ for separating sets, we can construct Markov chains starting at $\qoi=\npset(\node_0)$ and terminating at the leaf quantities $\npset(\nodel_i)$.
We first sketch this procedure for the first leaf $\nodel_1$, before giving a formal presentation in Algorithm \ref{alg:markov_chain}.
At each iteration $k\geq 1$, we start with a chain $\pichain=(\pindex_1,\pindex_2,\ldots,\pindex_{k+1})$ such that $\npset(\pichain) \stackrel{\mathrm{def}}{=} \left( \npii{1},\npii{2},\ldots,\npii{k+1} \right)$ satisfies the Markov condition \eqref{eq:markov_path_condition}, and such that $\npii{1}=\qoi$ and $\npii{k+1}=\psi(\nodel_1)$. We also specify an insertion point $1\leq j\leq k$. We then attempt to insert a set $\pindex'$ between $\pindex_j$ and $\pindex_{j+1}$, forming a new chain $\pichain'$ such that $\npset(\pichain')$ retains property \eqref{eq:markov_path_condition}.

To achieve this, we form $\pichain^l = \left( \pindex_1,\ldots,\pindex_j \right)$ and $\pichain^r = \left( \pindex_{j+1},\ldots,\pindex_{k+1} \right)$ and attempt to construct two separating sets $S^*_M(\pichain^l; \pichain^r)\subseteq \pindext$ and $S^*_M(\pichain^r; \pichain^l)\subseteq\pindext$ neighboring $\pindex^l = \bigcup_{i=1}^j\pindex_i$ and $\pindex^r = \bigcup_{i=j+1}^{k+1}\pindex_i$ respectively. If $S^*_M(\pichain^l; \pichain^r) = S^*_M(\pichain^r; \pichain^l)=\emptyset$, then no separator exists, and we terminate. 
If the separators are nonempty, we take $\pindex' = S^*_M(\pichain^l; \pichain^r)$ if $S_{\mathrm{lik}}(S^*_M(\pichain^l; \pichain^r)) \leq S_{\mathrm{lik}}(S^*_M(\pichain^r; \pichain^l))$, and $\pindex' = S^*_M(\pichain^r; \pichain^l)$ otherwise.
We then insert $\pindex'$ at index $j$, forming the new sequence $\pichain' = \left( \pindex_1,\ldots,\pindex_j,\pindex',\pindex_{j+1},\ldots,\pindex_{k+1} \right)$. Our insertion point at iteration $k+1$ is then $j+1$ or $j$ depending on whether $\pindex'$ neighbors $\pindex_{j}$ or $\pindex_{j+1}$.

\begin{algorithm}
  \begin{algorithmic} \onehalfspacing
  \Require Connected, undirected graph $M$, start and end sets $\pindex_s, \pindex_e\subseteq \pindext$, threshold $\gamma\in(0,1)$.
  \Function{MarkovChain}{$M$, $\pindex_s$, $\pindex_e$, $\gamma$}
  \State $\pichain \gets \left( \pindex_s, \pindex_e \right); \quad i^* \gets 1;$
  \State $F_1 \gets \pindex_s;\quad F_2\gets \pindex_e;$
  \While{$N_{M}(F_1)\cap F_2 = \emptyset$}
    \State $S_1 \gets S^*_M\left( F_1; F_2 \right);\quad S_2 \gets S^*_M\left( F_2; F_1 \right);$
    \State $C_1 \gets S_{\mathrm{lik}}(S_1);\quad C_2\gets S_{\mathrm{lik}}(S_2);$
    \If{$\min\{C_1, C_2\} > \gamma$}
      \State $\mathbf{break};$
    \ElsIf{$C_1 \leq C_2$}
      \State $S \gets S_1;\quad F_1 \gets F_1\cup S;$
      \State $i^*\gets i^* + 1;$
    \ElsIf{$C_1 > C_2$}
      \State $S \gets S_2;\quad F_2 \gets F_2 \cup S;$
    \EndIf
    \State $\pichain \gets \left( \pichain_{\leq i^*}, S, \pichain_{> i^*} \right);$
  \EndWhile
  \State \Output $\pichain$
  \EndFunction
  \end{algorithmic}
  \caption{Markov Chain Construction}
  \label{alg:markov_chain}
\end{algorithm}

This procedure is initialized with the sequence $\left( \pindex_1,\pindex_2 \right) = \left( \{i_{\qoi}\}, \pindex(\nodel_i) \right)$ and $j=1$. We terminate when we either cannot locate a separating set $\pindex'$, or when the likelihood specificity $S_{\mathrm{lik}}(\pindex')$ exceeds a user-specified threshold $\gamma\in (0,1)$. Algorithm \ref{alg:markov_chain} gives the full iteration. A formal definition of the separating set operator $S^*_M(\cdot;\cdot)$ as well as a proof that Algorithm \ref{alg:markov_chain} produces valid Markov chains is given in the supplementary material.

To combine the generated chains (one for each leaf node) into a tree, we need two more operations. First, we can trivially promote any single chain $\pichain=\left( \pindex_1,\pindex_2,\ldots\pindex_k \right)$ into a tree as follows:
\begin{equation}
  \label{eq:promotion_def}
  \mathrm{Tree}(\pichain) = \left(\left( \node_i \right)_{i=1}^{ k }, \left[\node_2 = \node_1,\node_3 = \node_2,\ldots, \node_{ k } = \node_{ k-1 } \right]\right),
\end{equation}
where $\pindex(\node_i )= \pindex_{i}$, and $\left[\node_2 = \node_1,\node_3 = \node_2,\ldots, \node_{ k } = \node_{ k-1 } \right]$ defines the parent mapping $P$ in Definition \ref{def:sensitivity_tree}.
Then, given a \DTree $T$ and chain $\pichain=\left( \pindex_1,\pindex_2,\ldots,\pindex_k \right)$ with $\pindex_1=\{i_{\qoi}\}$, we need a method of ``grafting'' $\pichain$ onto $T$.
\begin{definition}[Grafting]
  Let $\node^*$ be the terminal node of the longest rooted path $\left(\node_1,\node_2,\ldots,\node_m\right)$ in $T$ such that $\pindex(\node_i) = \pindex_i$ for all $1\leq i\leq m\leq k$. (The uniqueness of this longest path is guaranteed by the uniqueness of children in Definition \ref{def:sensitivity_tree}.) For $j = 1,\ldots, k-m$, define nodes $\node_j^{\mathrm{new}}$ with $\pindex(\node_j^{\mathrm{new}}) = \pindex_{m+j}$. Then the tree with $\pichain$ grafted is defined as
  \begin{equation}
    \label{eq:graft_def}
    \underset{\pichain}{\mathrm{Graft}}(T) = \left( \mathcal{N}_T\cup \{\node_j^{\mathrm{new}}\}_{j=1}^{ k-m }, P\left[\node_1^{\mathrm{new}} = \node^*, \node_2^{\mathrm{new}} = \node_1^{\mathrm{new}},\ldots, \node_{ k-m }^{\mathrm{new}} = \node_{ k-m-1 }^{\mathrm{new}}\right] \right),
  \end{equation}
  where $P[...]$ denotes the parent relation $P$ in $T$ with the additional relations $...$ added.
\end{definition}

\begin{algorithm}
  \begin{algorithmic} \onehalfspacing
  \Require Connected, undirected graph $M$, root index $i_{\qoi}$, leaf nodes $\{\nodel_i\}_{i=1}^l$, threshold $\gamma\in(0,1)$.
  \For{$i\in\{1,\ldots,l\}$}
    \State $\pichain_i\gets\mathrm{MarkovChain}\left( M, \pindex_s = \{i_{\qoi}\}, \pindex_e = \pindex(\nodel_i), \gamma \right);$
  \EndFor
  \State $T \gets \mathrm{Tree}\left( \pichain_1 \right);$
  \For{$j \in \{2,\ldots, l\}$}
    \State $T \gets \underset{\pichain_j}{\mathrm{Graft}}\left( T \right);$
  \EndFor
  \State \Output $T$
  \end{algorithmic}
  \caption{\DDTree Construction}
  \label{alg:sensitivity_tree_from_chains_simple}
\end{algorithm} 

With the promotion and grafting operations defined, we can build a tree out of a sequence of leaf nodes $\{\nodel_i\}_{i=1}^l$ using Algorithm \ref{alg:sensitivity_tree_from_chains_simple}. A proof that Algorithm \ref{alg:sensitivity_tree_from_chains_simple} yields valid \DTrees is given in the supplementary materials.

\subsection{Operations on \DDTreesNS}

Given an initial \DTree, we may want to add nodes to find the better explanatory sets. To this end, we develop a suite of operations which modify \DTreesNS.

\textbf{Branching from a node.}
Given a node $\node$ in $T$, we may ask if a proper subset $\pindex'\subset\pindex(\node)$ is almost as good in the sense that $U^2_{\pindex'}\approx U^2_{\pindex(\node)}$. We add $\pindex'$ to $T$ with a branching operation.

\begin{definition}[Branching]
  \label{def:branch}
  Let $T=\left( \mathcal{N}, P \right)$ be a \DTreeNS, $\node$ a node in $T$, and $\pindex'\subset\pindex(\node)$ a proper subset. Then we define the tree branched from $\pindex(\node)$ to $\pindex'$ as 
  \begin{equation}
    \label{eq:branch_def}
    \branch{\node}{\pindex'}{T}
    = \begin{cases}
      T, & \text{$\exists\node''\in T$ with $P(\node'')=\node$ and $\pindex(\node'')=\pindex'$}\\
      \left( \mathcal{N}\cup\{\node'\}, P\left[ \node' = \node \right]\right), & \text{otherwise},
    \end{cases}
  \end{equation}
  where $\node'$ is a new node with $\pindex(\node') = \pindex'$.
\end{definition}

\textbf{Subdividing a branch.} 
Consider a pair of nodes $\node_p$ and $\node_c$ with $P(\node_c) = \node_p$. If $U^2_{\pindex(\node_p)} \ll U^2_{\pindex(\node_c)}$, we may look for a third set that splits the difference between $\pindex(\npset_c)$ and $\pindex(\npset_p)$. This can be achieved by the following subdivision operation.

\begin{definition}[Subdividing]
  Let $T=\left( \mathcal{N}, P \right)$ be a \DTree and $\node_p, \node_c$ be nodes in $T$ with $P(\node_c) = \node_p$. For any $\pindex_p'\subseteq \pindex(\node_p)$, let $\pindex_s = \pindex_p'\cup\pindex(\node_c)$. We subdivide the branch from $\node_p$ to $\node_c$ by $\pindex_s$ as follows:
  \begin{equation}
    \label{eq:subdivide_def}
    \subdivide{\node_p}{\pindex_p'}{\node_c}{T} = \begin{cases}
      \left( \mathcal{N}, P[\node_c = \node']\right), & \text{$\exists\node'\in T$ with $P(\node') = \node_p$ and $\pindex(\node')=\pindex_s$}\\
      \left( \mathcal{N}\cup \{\node_s\}, P[\node_c = \node_s, \node_s = \node_p] \right), & \text{otherwise},
    \end{cases}
  \end{equation}
  where $\node_s$ is a new node with $\pindex(\node_s)=\pindex_s$.
\end{definition}

\textbf{Merging two nodes.}
For two nodes $\node_1$ and $\node_2$, it may be useful to consider the combination $\npset_{\pindex(\node_1)\cup\pindex(\node_2)}$. For example, let $\node^*$ be the deepest common ancestor of $\node_1$ and $\node_2$. If $\node^*$ is much more explanatory than $\node_1$ or $\node_2$ separately -- i.e. if $U^2_{\pindex(\node^*)} \ll U^2_{\pindex(\node_1)}$ and $U^2_{\pindex(\node^*)} \ll U^2_{\pindex(\node_2)}$ -- we may ask how much of this gap can be bridged by $\pindex(\node_1)\cup\pindex(\node_2)$. To place this union on a \DTree, we merge the corresponding nodes.

\begin{definition}[Merging]
  Let $T=\left( \mathcal{N}, P\right)$ be a \DTreeNS, and let $\node_1$ and $\node_2$ be nodes with deepest common ancestor $\node^*$. Let $\pichain_{\mathrm{merge}} = \mathrm{MarkovChain}\left( M, \pindex(\node^*), \pindex(\node_1)\cup\pindex(\node_2), \gamma \right)$. Then the tree with nodes $\node_1$ and $\node_2$ merged is given as 
  \begin{equation}
    \label{eq:merge_def}
    \merge{\node^*}{\node_1}{\node_2}{T} = 
    \branchplain{\node^m}{\pindex(\node_2)}\;\circ\;
    \branchplain{\node^m}{\pindex(\node_1)}\;\circ\;
    \underset{\pichain_{\mathrm{merge}}}{\mathrm{Graft}}(T),
  \end{equation}
  where $\circ$ denotes compoisition of functions, and $\node^m$ is the unique leaf node introduced in $\underset{\pichain_{\mathrm{merge}}}{\mathrm{Graft}}(T)$ with $\pindex(\node^m) = \pindex(\node_1)\cup\pindex(\node_2)$.
\end{definition}

Each of these operations preserve the invariants of \DTrees.
\begin{lemma}\label{lem:st_preserved}
  Let $p$ be a density which factorizes over $\mathcal{F}$, and let $M_{\mathcal{F}}$ be the corresponding Markov graph. Let $T=\left( \mathcal{N}, P \right)$ be a \DTree for $p$ and $M_{\mathcal{F}}$ generated by Algorithm \ref{alg:sensitivity_tree_from_chains_simple}. Let $\node_1$ and $\node_2$ be nodes in $T$.
  \begin{enumerate}
    \item If $\pindex'\subset \pindex(\node_1)$, then $\branch{\node_1}{\pindex'}{T}$ is a \DTreeNS.
    \item If $P(\node_2)=\node_1$ and $\pindex'\subseteq \pindex(\node_1)$, then $\subdivide{\node_1}{\pindex'}{\node_2}{T}$ is a \DTree.
    \item If $\node^*$ is the deepest common ancestor of $\node_1$ and $\node_2$, then $\merge{\node^*}{\node_1}{\node_2}{T}$ is a \DTree.
  \end{enumerate}
\end{lemma}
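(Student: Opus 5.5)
The plan is to check the three constraints of Definition \ref{def:sensitivity_tree} for each operation. The root never changes, so constraint 1 is automatic. Constraint 3 (unique children) is built into the case splits in \eqref{eq:branch_def} and \eqref{eq:subdivide_def}: a new child is created only when no sibling already carries that index set, and otherwise the existing sibling is reused. The real work is constraint 2 (Markov paths), and only for paths that pass through a new or rewired edge. The basic tool is a \emph{concatenation lemma}. Suppose a rooted path $(\node_1,\ldots,\node_k)$ satisfies \eqref{eq:markov_path_condition}, and $\npset(\node')$ is conditionally independent of $\npset(\node_{\le k})$ given $\npset(\node_k)$. Then the extended path also satisfies \eqref{eq:markov_path_condition}. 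This follows from the contraction and weak union properties of conditional independence. It is also enough to treat rooted paths, since subpaths of Markov paths inherit the property by decomposition and weak union.

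\textbf{Branching.} Every new path ends at the new leaf $\node'$ with $\pindex' \subset \pindex(\node_1)$. For any earlier index $i<k$ on such a path, $\npset(\node_{>i})$ is just $\npset(\node_{>i})$ of the old path with a subvector of $\npset(\node_1)$ added. So the required independence follows from the old one by decomposition: a subvector of the conditioning-side variables adds nothing. At $i=k$, i.e.\ conditioning on $\npset(\node_1)$, the statement is that $\npset_{\pindex'}$ is independent of the rest given a superset of itself, which is trivial.

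\textbf{Subdividing.} The new node $\node_s$ has $\pindex_s = \pindex_p' \cup \pindex(\node_2)$, sitting between $\node_1$ and $\node_2$.
\begin{itemize}
\item Given $\npset(\node_s)$, everything below is independent of everything above. The reason is that $\npset(\node_s) \supseteq \npset(\node_2)$, and the old chain already gives independence of the two sides given $\npset(\node_2)$ (weak union absorbs $\pindex_p'$).
\item At $\node_1$, the set $\pindex_s \setminus \pindex(\node_2) \subseteq \pindex(\node_1)$, so conditioning on $\npset(\node_1)$ makes those extra coordinates degenerate and the old independence carries over.
\end{itemize}
The first case of \eqref{eq:subdivide_def} re-parents $\node_2$ to an existing sibling $\node'$ with the same $\pindex_s$, which is covered by the same argument. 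The subtree under $\node_2$ then needs the same checks along each path; I expect the same weak-union reasoning to apply uniformly.

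\textbf{Merging.} By the supplementary result that Algorithm \ref{alg:markov_chain} produces valid Markov chains, $\pichain_{\mathrm{merge}}$ is Markov from $\pindex(\node^*)$ to $\pindex(\node_1)\cup\pindex(\node_2)$. Grafting it below the rooted path to $\node^*$ requires the concatenation lemma at the junction. We need that, given $\npset(\node^*)$, the chain's nodes are independent of the ancestors of $\node^*$. This is the main obstacle. I would obtain it from separation in $M_{\mathcal{F}}$, using that $T$ was built by Algorithm \ref{alg:sensitivity_tree_from_chains_simple}. Every node below $\node^*$ in $T$ lies on the side of the separator $\pindex(\node^*)$ away from the root, and the new chain's sets are constructed as separators between $\pindex(\node^*)$ and $\pindex(\node_1)\cup\pindex(\node_2)$, all on that far side. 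That is why the lemma assumes $T$ comes from the algorithm and not an arbitrary \DTreeNS. There is one more detail: the graft shares a maximal common prefix with $T$ and attaches the remainder. Since Graft in \eqref{eq:graft_def} matches a common prefix of $\pichain_{\mathrm{merge}}$ with the rooted path to $\node^*$, I would argue this prefix is exactly the rooted path through $\node^*$ and that uniqueness of children is preserved.

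The two final Branch operations at $\node^m$ are valid by part 1, because $\pindex(\node_1)$ and $\pindex(\node_2)$ are subsets of $\pindex(\node^m)$. If one is not proper, because one set contains the other, the branch is a no-op or degenerate and needs a brief separate remark.
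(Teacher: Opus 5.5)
Your argument is correct in substance, but it takes a different route from the paper.

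\textbf{How the paper argues.} The paper never manipulates conditional independences directly. It proves that Branch and Divide send \emph{separating} trees to separating trees, by checking the graph condition $C_{M-\pindex_j}(\pindex_{\leq j}\setminus\pindex_j)\cap(\pindex_{\geq j}\setminus\pindex_j)=\emptyset$. It does this by comparing each new path with the old one (drop the new leaf, or delete $\node_s$). The relevant unions are unchanged, except at $\node_s$, where a larger set is deleted from $M$ and the starting set shrinks. Merge is then dismissed as Graft followed by Branch, via Lemma \ref{lem:grafting_valid}.

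\textbf{What your route buys.} You use the semi-graphoid axioms instead. As a result, the Markov part of items 1 and 2 holds for any distributary tree, and the Algorithm \ref{alg:sensitivity_tree_from_chains_simple} hypothesis is isolated at the one place it is really needed: the junction at $\node^*$. This is more careful than the paper. Lemma \ref{lem:grafting_valid} only treats chains beginning at $\{i_{\qoi}\}$, whereas $\pichain_{\mathrm{merge}}$ begins at $\pindex(\node^*)$, so the paper skips exactly the step you flag.

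\textbf{Closing the junction step.} Your ``far side'' claim closes it once made precise:
\begin{enumerate}
\item Throughout Algorithm \ref{alg:markov_chain} we have $F_1\supseteq\pindex(\node^*)$ and $F_2\supseteq D=\pindex(\node_1)\cup\pindex(\node_2)$. Applying Lemma \ref{lem:s_paths} inductively over its iterations, every inserted separator vertex reaches $D$ by a path avoiding $\pindex(\node^*)$.
\item Since $T$ is separating, $\pindex(\node^*)$ separates its strict ancestors from $D$, and hence from the whole merge chain.
\item Contraction with the chain's own Markov property then gives the Markov condition at every position of the concatenated path.
\end{enumerate}
You also do not need the prefix reused by Graft to end exactly at $\node^*$. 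Whatever prefix is matched, each new rooted path still carries this concatenation.

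\textbf{What the paper's route buys.} The paper's invariant composes. Its outputs are again separating trees, so a merge performed after a branch or subdivision is still covered. Your Markov-only treatment of items 1 and 2 does not carry forward the separation that your junction argument uses.

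\textbf{A caveat for both of you.} Unique children is not automatic in the re-parenting case of Divide, since $\node'$ could already have a child with index set $\pindex(\node_2)$. For trees from Algorithm \ref{alg:sensitivity_tree_from_chains_simple} this is harmless: the sets along an Algorithm \ref{alg:markov_chain} chain are pairwise disjoint, so that case arises only degenerately (with $\pindex'=\emptyset$ and $\node'=\node_2$). That still needs to be said.
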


\begin{proof}
  See supplementary materials.
\end{proof}

\textit{Remark:} The statement of Lemma \ref{lem:st_preserved} requires that the \DTree be generated by Algorithm \ref{alg:sensitivity_tree_from_chains_simple} because this algorithm produces objects with slightly stronger independence properties than those guaranteed by the definition of \DTreesNS. A definition of these stronger objects, which we term \textit{separating} trees, is given in the supplementary material.

\section{Example: Causal Inference in Simulated Panel Data}\label{sec:ex1_sc}

Our goal in this example is inference for a treatment effect parameter in a simulated causal inference problem based on panel data. We simulate data $\data_{it}$ for $1\leq i\leq N=6$ units over $1\leq t\leq T=10$ time points, with unit 1 exposed to a treatment after time $t_{\delta}=6$. Figure \ref{fig:sc_data} plots the simulated data, with vertical lines showing the treatment time for unit 1. 

\begin{figure}
  \centering
  \includegraphics[width=0.66\textwidth]{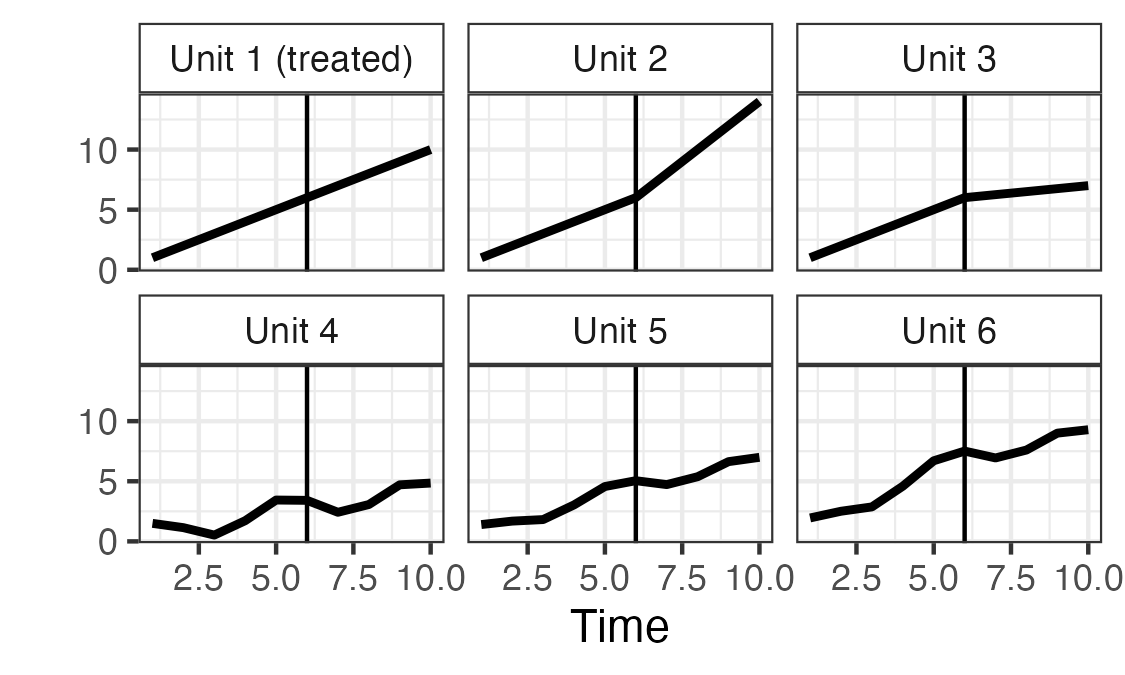}
  \caption{The simulated panel data. Units 1-3 are all identical in the pre-treatment period, but divergent in the post-treatment period. Only unit 1 is treated.}
  \label{fig:sc_data}
\end{figure}

Estimating treatment effects relies on inferring correlations between units in the pre-treatment period which can be extrapolated into the post-treatment period. In particular, we rely on the post-treatment behavior of the untreated units to learn how the treated unit was likely to behave in the absence of treatment. This strategy and the corresponding inference for the treatment effect can be undermined in multiple ways.

A lack of relevant data is the most straightforward threat to inference. For instance, in Figure \ref{fig:sc_data}, units 4 -- 6 are a poor match to unit 1 in the pre-treatment period, and thus provide little reliable information about unit 1 in the post-treatment period.
On the other hand, units 2 and 3 closely match unit 1 in the pre-treatment period, but create a different threat to inference: Despite matching unit 1 equally well before treatment, units 2 and 3 diverge and give incompatible predictions in the post-treatment period. In this case, we have too much potentially relevant information to infer the treatment effect with confidence.

\subsection{Parameters and Model}

We fit the data in Figure \ref{fig:sc_data} using a latent factor model augmented with a treatment effect parameter. Our model closely resembles Bayesian approaches to synthetic control, a causal inference methodology for observational panel data \citep{sc_abadie, bayes_sc}. Specifically, the matrix of observations $\datam \in\RR^{N\times T}$ is modeled as the sum of three terms:
\begin{equation}
  \label{eq:LFM}
  \datam = \lfload\lffac + \effect + \epsilonv.
\end{equation}
From right to left:
\begin{enumerate}
  \item $\epsilonv\in \mathbb{R}^{N\times T}$ is a matrix of independent, zero-mean Gaussian errors which are independent of all other quantities. The entries of $\epsilonv$ have constant variance $\sigma^2_i$ along each row $1\leq i\leq N$, i.e. $\mathrm{Var}(\epsilonc_{it})=\sigma^2_i$ for all $1\leq t\leq T$. 
  \item $\effect\in\mathbb{R}^{N\times T}$ is the matrix of treatment effects, i.e. $\effectc_{it} = \effects_{t - t_{\delta}} \mathbbm{1}_{\{i=1, t > t_{\delta}\}}$, where $\mathbbm{1}_{S}$ is the indicator of $S$, and $\delta\in \RR^{T-t_{\delta}}$ is a vector of nonzero treatment effects.
  \item $\lffac\in\mathbb{R}^{K\times T}$ represents $K$ time-varying latent factors, and $\lfload\in\mathbb{R}^{N\times K}$ contains factor loadings for each unit. The product of $\lfload$ and $\lffac$ is referred to as the latent component.
\end{enumerate}

The latent factors (i.e. rows of $\lffac$) are independent and assigned $\mathrm{AR}(1)$ priors with unit stationary variance. The loadings matrix $\lfload$ is assumed lower triangular with positive diagonal in order to eliminate nonidentification that otherwise arises from the product $\lfload\lffac$. This constraint on $\lfload$ is typical in latent factor models and, importantly, does not constrain the latent component $\lfload\lffac$ \citep{bayes_lfm_id}. 

The untreated potential outcome is defined as $\datam(0) = \datam - \effect$, i.e. what would be observed had the treatment not occurred. Conditional on $\lfload$ and the $\sigma_i^2$, the cross-sectional covariance of the untreated potential outcomes 
\[
\data_{\cdot t}(0) = (\data_{1t}(0), \data_{2t}(0), \ldots, \data_{Nt}(0))
\] 
is $\lfload\lfload^T+\diag\left( \sigma^2 \right)$ for all $1\leq t\leq T$, where $\diag\left(\sigma^2\right)$ is the $N\times N$ diagonal matrix with $\diag\left(\sigma^2\right)_{ii} = \sigma^2_i$. Therefore, the elements of $\lfload$ directly control the cross-sectional latent correlation between (untreated) units. Complete details of the prior specification can be found in the supplementary materials.

We fit the model \eqref{eq:LFM} to the simulated data in Figure \ref{fig:sc_data} using Stan \citep{stan}. As expected, we find weakly identified marginal posteriors for all effects $\delta_t$. For simplicity, we take the effect at the final time, $\delta_{4}$, as our quantity of interest. The posterior for $\delta_{4}$ is centered near $0$ with a $95\%$ credible interval of $[-3.95, 4.17]$. Over the pre-treatment period, the treated unit had an observed standard deviation of $1.9$, so essentially any plausible effect remains possible a posteriori.

\subsection{Analysis with \TTrees}

We consider two types of hypothetical data: (1) observations $\datapotm\in\RR^{N\times T^*}$ of each observed unit at $T^*=6$ additional pre-treatment times, and (2) observations $\datapot^{\mathrm{new}}\in \RR^{T+T^*}$ of a new unit at all times, resulting in $N^*=N+1$ units. The hypothetical data are sampled from their posterior predictive distribution (details given in the supplementary materials). Using Algorithm \ref{alg:sensitivity_tree_from_chains_simple}, we construct a \DTree rooted at $\delta_{4}$ with $N^*$ leaf nodes representing the hypothetical data for each unit. Figure \ref{fig:init_ex_ptree1} displays the corresponding \TreeNS.

\begin{figure}
  \centering
  \includegraphics[width=0.95\textwidth]{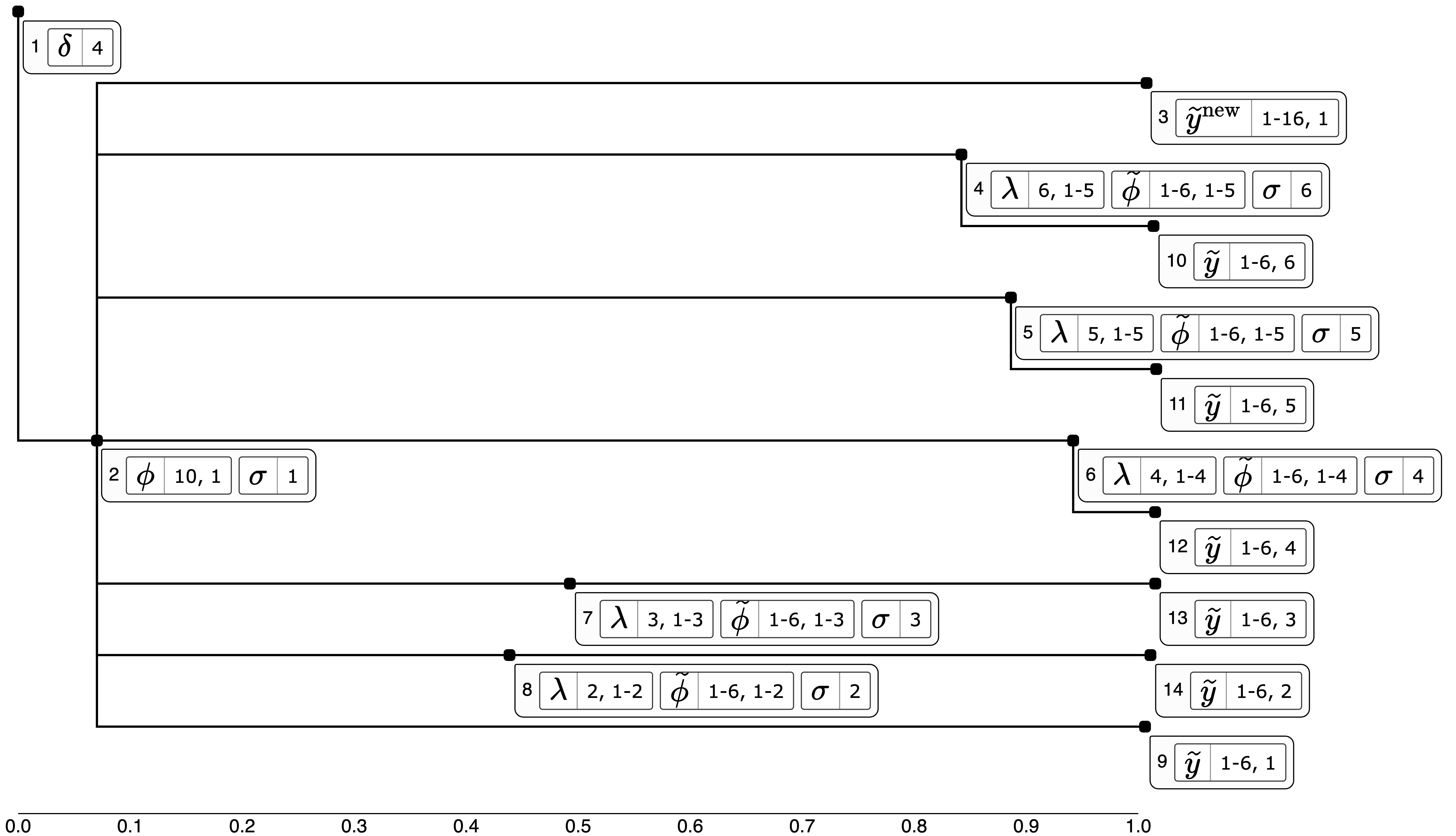}
  \caption{The generated \Tree with leaves for $\datapot_{i\cdot}$ ($1\leq i\leq N$) and $\datapot^{\mathrm{new}}$.}
  \label{fig:init_ex_ptree1}
\end{figure}

All leaf nodes have root uncertainty indices near $1$, indicating that observing the hypothetical data for any single unit is unlikely to improve the identification of $\delta_{4}$. Therefore, if we want to learn about $\effects_4$ using this hypothetical data, it will be necessary to combine hypothetical data across units. While this can be achieved with the merge operation defined in Section \ref{sec:post_trees}, we don't yet know which nodes to merge. 

\begin{figure}
  \centering
  \includegraphics[width=0.95\textwidth]{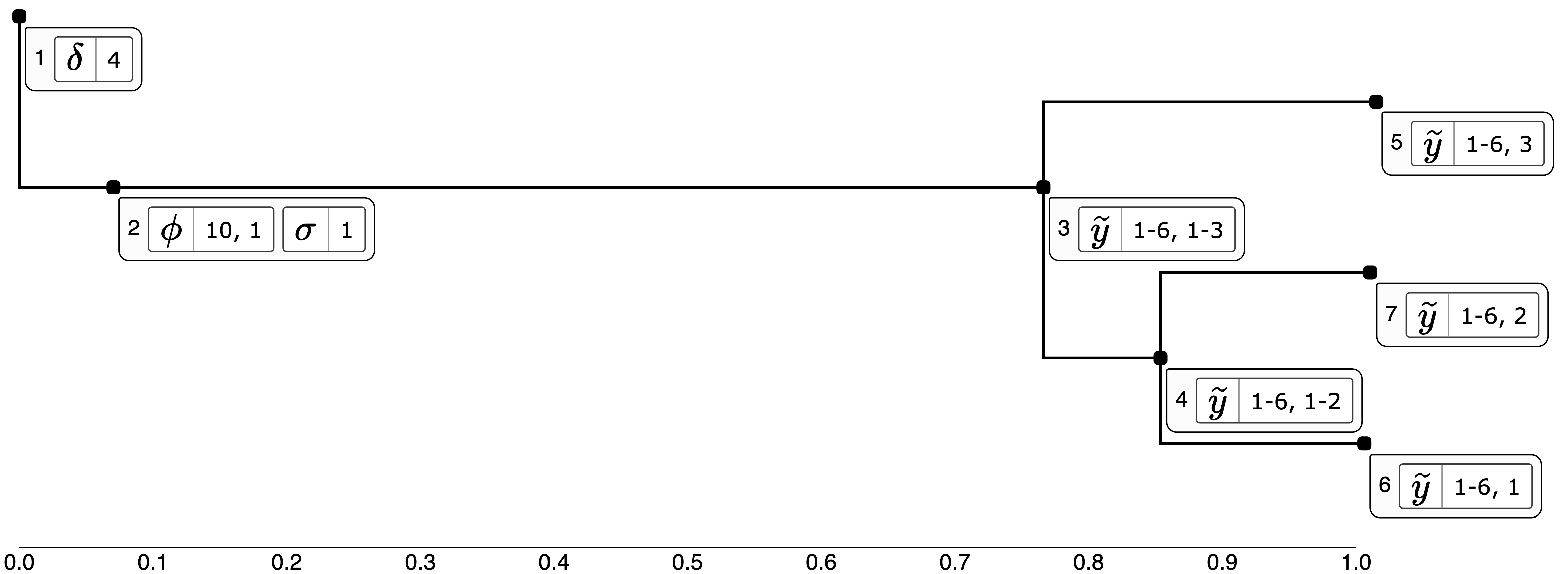}
  \caption{The \Tree for the leaves corresponding to units 1-3, with leaves 1 \& 2 merged (top plot) and all leaves merged (bottom plot).}
  \label{fig:init_ex_ptree_merge}
\end{figure}

Rather than manually guess, we can automatically search for the pair of leaves $\nodel_1$ and $\nodel_2$ such that $U^2_{\pindex(\nodel_1)\cup\pindex(\nodel_2)}$ is minimal. The \texttt{variance-deltas} package implements this automation, and the result of applying it twice is displayed in Figure \ref{fig:init_ex_ptree_merge}, showing only the newly added paths.
The greatest reduction in uncertainty comes from observing $\datapot_{1\cdot}$ and $\datapot_{2\cdot}$ together, with further improvement from adding $\datapot_{3\cdot}$. Specifically, combining $\datapot_{1\cdot}$, $\datapot_{2\cdot}$, and $\datapot_{3\cdot}$ yields expected posterior standard deviation $< 75\%$ of the original posterior standard deviation, at the cost of observing $30\%$ more data. 

Furthermore, in Figure \ref{fig:init_ex_ptree1}, the paths terminating in $\datapot_{2\cdot}$ and $\datapot_{3\cdot}$ (the top two paths) also included the best intermediate nodes. These intermediate nodes correspond to the loadings $\lfloadc_{2\cdot}$ and $\lfloadc_{3\cdot}$ respectively, along with the latent factors $\lffacnew$ for the new pre-treatment times. Since the loadings $\lfload$ control the latent correlations between units, our analysis so far suggests that the primary bottleneck to learning about $\effects_4$ is uncertainty about the latent correlations between unit 1 (the treated unit) and units 2 and 3.

To confirm this, we extend our tree again, merging the intermediate nodes and then branching to isolate the loadings. Figure \ref{fig:init_ex_ptree_inner} displays the path from this node to the root. As expected, exact knowledge of the loadings for units 2 and 3 reduces expected uncertainty to nearly a quarter of its current level.

\begin{figure}
  \centering
  \includegraphics[width=0.95\textwidth]{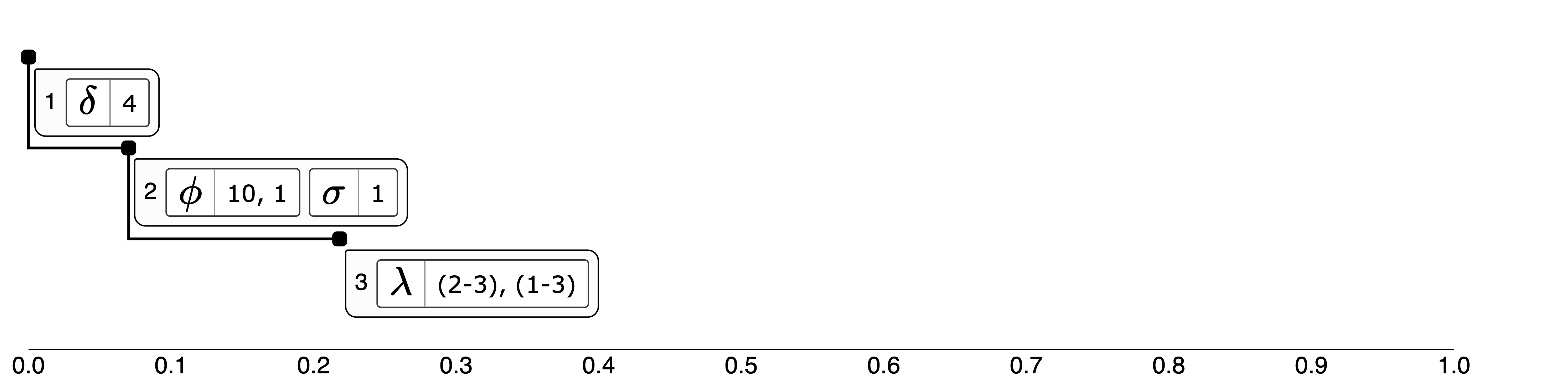}
  \caption{The path connecting the node for $\lfloadc_{2\cdot}$ and $\lfloadc_{3\cdot}$ to the root.}
  \label{fig:init_ex_ptree_inner}
\end{figure}

\section{Polling Model Example}\label{sec:ex2_polls}

Our second example applies \Trees to characterize uncertainty in a complex model of polls from the 2016 U.S. presidential election. In particular, we analyze the model of \citet{gelman_polling}, which includes components for temporal trends, interstate correlation, and several sources of potential bias/measurement error. We use \Trees to isolate two model components which together explain most of the posterior uncertainty and which allow us to determine the extent to which this uncertainty could be reduced.

The data are a subset of polls from the 2016 U.S. presidential election at the state and national level. Each poll reports the number of respondents and the number supporting the Democratic candidate, Hilary Clinton. Our goal is to forecast the proportion of voters who will support the Democratic candidate on election day in each state.

\subsection{Parameters and Model}

Let $1\leq t\leq T$ index the number of days in the model, with $t=1$ being the day of the first recorded poll, and $t=T$ being election day. The proportion of voters supporting the Democratic candidate in each state on each day is represented (on the logistic scale) by the matrix $\pmmean^b\in\RR^{51\times T}$. The columns of $\pmmean^b$ are assigned a time series prior:
\begin{equation}\label{eq:dem_support_prior}
  \pmmeanc^b_{\cdot t}\mid\pmmeanc^b_{\cdot (t+1)} \sim\mathrm{normal}\left( \pmmeanc^b_{\cdot (t+1)}, \mathbf{S}^{\pmmean} \right) \text{ for } 1\leq t\leq T-1, \text{ and } \pmmeanc^b_{\cdot T}\sim \mathrm{normal}\left( m^{\mathrm{f}},\mathbf{S}^{\mathrm{f}} \right).
\end{equation}
Here $\mathbf{S}^{\pmmean}\in\RR^{51\times 51}$ is a hyperparameter encoding correlation between states and variation over time. Likewise, $m^f\in\RR^{51}$ and $\mathbf{S}^f\in\RR^{51\times 51}$ are hyperparameters governing the prior on the election day outcome. 

Results of state and national polls are given a binomial model combining $\pmmean^b$ with terms for various types of polling bias. Letting $i=1,\ldots,N_{\mathrm{state}}$ index state polls and $y_i$ denote the number of respondents supporting the Democratic candidate out of $n_i$ respondents in poll $i$, we have
\begin{equation}
  \label{eq:poll_data_model}
  \datac_i \sim \mathrm{binomial}\left(\mathrm{logit}^{-1}\left( \pmmeanc^{b}_{s_i,t_i}+\biasp_i\right), n_i \right),
\end{equation}
where $s_i$, $t_i$ denote the state and day for poll $i$, and $\beta_i$ models various sources of bias. National polls are modeled similarly, except that state-level terms including $\pmmeanc_{st}$ are averaged with weights accounting for each state's share of the national vote in the previous election.

The bias terms $\biasp_i$ in \eqref{eq:poll_data_model}, also modeled on the logistic scale, are decomposed into several further terms which are designed to capture the following sources of polling bias:
\begin{itemize}
\item A pollster-specific ``house'' effect $\pmmeanc^c$, modeling the nonzero, temporally stable bias that each pollster exhibits toward one party or the other.
\item A poll mode effect $\pmmeanc^m$, reflecting the phenomenon that the averages of phone- and internet-based polls tend to differ.
\item A population effect $\pmmeanc^r$, modeling differences between registered and likely voter polls.
\item A time-varying partisan nonresponse effect $\epsilon_t$, modeling differential rates of response between members of the two parties. The $\epsilon_t$ are assigned an AR(1) prior. 
\item State-level measurement error $\xi$ accounting for aggregate differences in the accuracy of polls in different states. This is assigned a multivariate normal prior with covariance determined by state correlations in previous elections and demographics.
\item Poll-level measurement error $\zeta$, a catch-all term representing additional variance in the polls' sampling frames unaccounted for by the above terms. These terms are assigned different prior scales at the state and national levels.
\end{itemize}
The complete model specification is given in \citet{gelman_polling}. Code and 2016 polling data for the model can be accessed at \url{https://github.com/TheEconomist/us-potus-model}.
We fit this model to available 2016 data using Stan \citep{stan}. 

Uncertain inference for the election-day Democratic support parameter $\pmmeanc^*_s = \pmmeanc_{sT}$ is particularly troublesome in states $s$ with $\E\left[ \pmmeanc^*_s\mid\data \right]\approx 0.5$, i.e. those believed to be swing states. As an example, we analyze the state of Wisconsin (indexed by $s^*=49$), taking $\pmmeanc^*_{s^*}$ as our quantity of interest. Transforming to the proportion scale, the $95\%$ credible interval for $\mathrm{logit}^{-1}\left( \pmmeanc^*_{s^*} \right)$ is $[.478, 0.564]$, indicating failure to forecast the winner of Wisconsin. A summary of analyses for other swing states is given in the supplementary material.

\subsection{Analysis with \TTrees}

We begin by constructing hypothetical polls $\datapot$ from the most common pollster of Wisconsin, conducted one week before the election ($t=39$), which is after the last available polls in our data set. For each hypothetical poll $\datapot_j$, we choose one of 3 polling modes, 3 target populations, and whether to adjust the result for differential response rates across parties, for a total of $3\times 3\times 2=18$ simulated polls. 

The initial \DTree is constructed from a single leaf node combining all 18 hypothetical polls. Figure \ref{fig:poll_ex_ptree} displays this tree. We note two important observations:
\begin{enumerate}
  \item The node labelled 2 corresponds to $\pmmeanc^{b}_{\cdot 40}$ --- the logistic-transformed Democratic support across all states on the day \textit{after} the hypothetical polls are conducted. This node occurs with root uncertainty index $\approx 1/3$. Therefore, at least 1/3 of the uncertainty about $\pmmeanc^*_{s^*}$ is purely temporal, and would remain even if we could learn the Democratic support $\pmmeanc^{b}_{s^*, 39}$ in Wisconsin on the day of the polls exactly.
  \item However, all 18 hypothetical polls together (the last node in Figure \ref{fig:poll_ex_ptree}) have root uncertainty index $\approx 0.9$. Thus, observing these new hypothetical polls likely would not come close to fully informing us about even $\pmmeanc^{b}_{s^*,39}$.
\end{enumerate}

\begin{figure}
  \centering
  \includegraphics[width=1\textwidth]{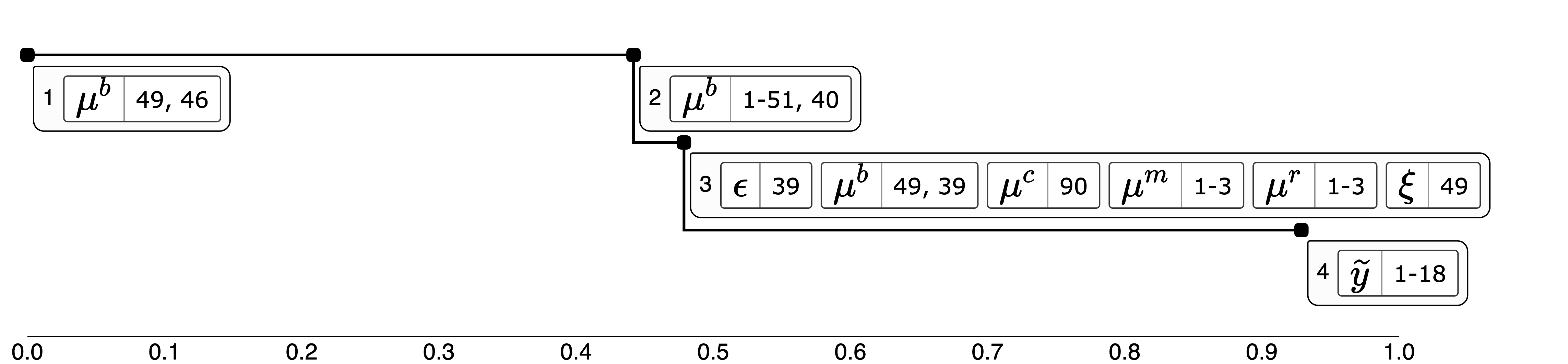}
  \caption{Initial \Tree with a single leaf representing all hypothetical polls.}
  \label{fig:poll_ex_ptree}
\end{figure}

The node containing $\pmmeanc^b_{s^*,39}$ (labelled $3$ in Figure \ref{fig:poll_ex_ptree}) also contains the bias parameters which affect the hypothetical polls. One or more these biases may thus explain the lack of information about $\pmmeanc^b_{s^*,39}$ in $\datapotset$. To test this, we insert a new node containing all bias parameters and $\datapotset$ (using the subdivision operation defined in Section \ref{sec:post_trees}). This new node represents how much would be learned from the hypothetical polls given exact knowledge of all biases.

Figure \ref{fig:poll_ex_ptree_sub} shows the \DTree obtained from this subdivision. The new node (labelled 4 in Figure \ref{fig:poll_ex_ptree_sub}) and its parent achieve essentially identical uncertainty indices. Since these nodes differ only in whether they include $\pmmeanc^b_{s^*, 38}$ or $\datapotset$, this implies that the lack of information in $\datapotset$ is due almost entirely to the bias terms.

To determine which of the bias parameters in nodes 3 and 4 are most important, we create new branches from node 4, one for each pair of $\datapotset$ and a single bias parameter. Figure \ref{fig:poll_ex_ptree_branched} displays the result. Only one bias parameter makes a substantial improvement in expected uncertainty when combined with $\datapotset$ --- the state-level polling bias $\xi$. Observing both $\datapotset$ and $\xi$ yields about 2/3 of the uncertainty reduction achievable by observing $\datapotset$ along with \textit{all} of the bias parameters, and about 1/2 of the uncertainty reduction achieved by exact knowledge of $\pmmeanc^{b}_{\cdot 40}$, the Democratic support in all states on the subsequent day.

\begin{figure}
  \centering
  \includegraphics[width=1\textwidth]{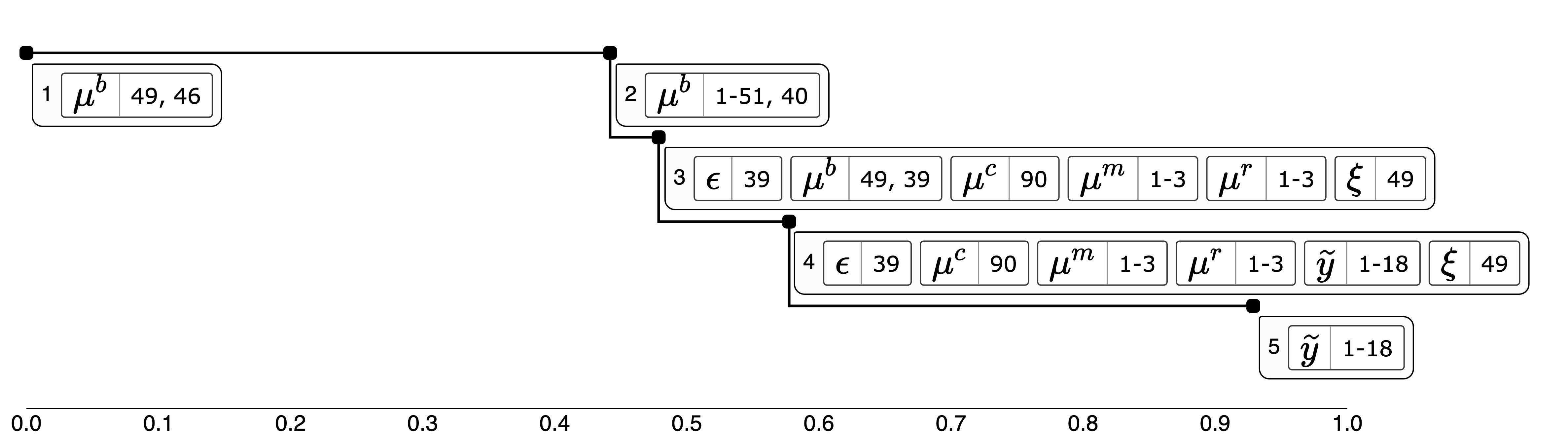}
  \caption{The result of subdividing the initial \Tree, adding a new node that combines the additional polls $\datapotset$ with all corresponding bias parameters.}
  \label{fig:poll_ex_ptree_sub}
\end{figure}

\begin{figure}[h]
  \centering
  \includegraphics[width=1\textwidth]{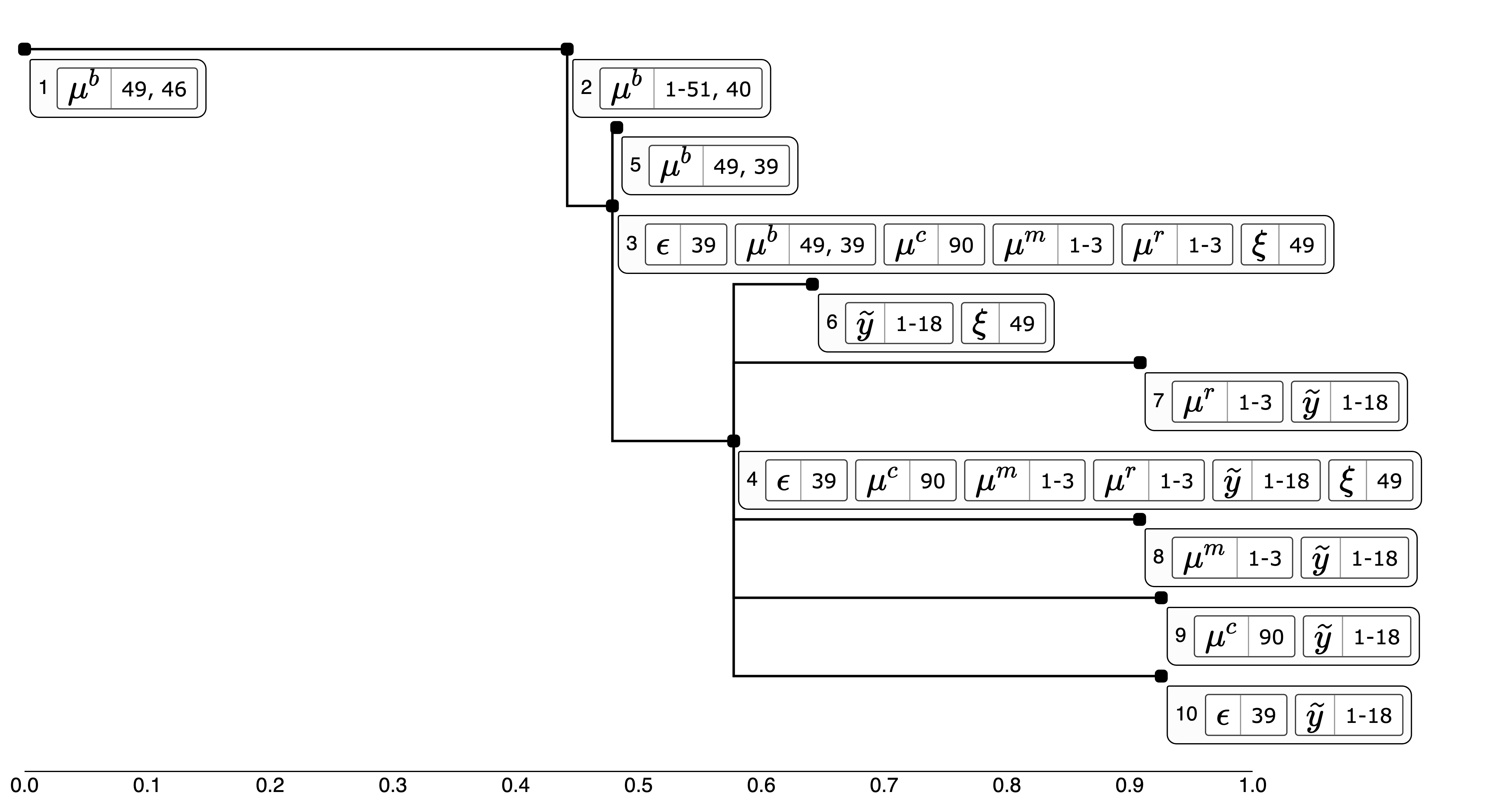}
  \caption{The result of adding branches from node 7 for each combination of additional polls $\datapotset$ and one of the bias parameters.}
  \label{fig:poll_ex_ptree_branched}
\end{figure}

In summary, we are able to draw several meaningful conclusions:
\begin{enumerate}
  \item Absent additional information, conducting more polls is unlikely to yield much improvement in the election day forecast for Wisconsin.
  \item Additional polling \textit{could} substantially reduce our uncertainty in combination with additional information about the state-level measurement error $\xi$.
  \item Even with this information, there is a nonnegligible limit to how much we can learn about the election day Democratic support in Wisconsin from polls conducted a week earlier, due to extrapolation uncertainty.
  \item Quantitatively, about 2/3 of the expected remaining uncertainty about $\pmmeanc^*_{s^*}$ given $\datapotset$ is explained by uncertainty about $\xi$ and extrapolation uncertainty.
\end{enumerate}

\section{Conclusion}\label{sec:conclusion}

We introduce \TreesNS, an interactive visualization method for explaining posterior uncertainty about scalar quantities of interest in Bayesian models, and implement this system in a software package \texttt{variance-deltas}. When the Bayesian model has a factorization $\mathcal{F}$, we show that the corresponding Markov graph $M_{\mathcal{F}}$ can be utilized to semi-automatically construct and transform \TreesNS, enabling the efficient generation and exploration of potential explanations.

Our two examples demonstrate that \Trees are capable of rapidly eliminating poor explanations of uncertainty and focusing our attention on specific, interpretable subsets of model quantities which can explain a large fraction of our overall uncertainty. In the simulated panel data example, we are able to identify small combinations of units which would be useful to observe over a longer period, while determining that other units are of vanishing small value. In the polling model example, we are able to isolate a pair of parameters that explain most of our uncertainty, directing us towards particular types of information which are needed to remove bottlenecks from our analysis.

The usefulness and general applicability of \Trees is limited, however, by the structure of the graph $M_{\mathcal{F}}$. In particular, our ability to discover useful and specific explanations depends on $M_{\mathcal{F}}$ being not too dense. When $M_{\mathcal{F}}$ is \textit{almost} sparse -- in that a few vertices have many neighbors -- we can recover a sparse graph by conditioning on these vertices at every step. (Details of this adjustment are given in the supplementary materials.) But when the graph $M_{\mathcal{F}}$ is very far from sparse, as can occur for instance in large regressions, it may be impossible to construct useful \TreesNS, or the explanatory sets $\pindex$ may be too large to reasonably estimate the uncertainty index $U^2_{\pindex}$.

\section*{Acknowledgements}
The AI tool Claude Code (version 2.0.76) was used for the following tasks during the preparation of this manuscript: (1) identification of some related literature, (2) generation of small amounts of code, and (3) review of mathematical arguments.

AI tools were \textit{not} used to generate any of the text of this manuscript (including references and supplementary material). AI-generated code was manually reviewed line by line, and all publications identified by AI were manually verified as existing and appropriate references for the present work.

\section*{Declaration of Interest Statement}
\textit{The authors report there are no competing interests to declare.}

\bibliography{bibliography.bib}

\newpage

\section{Supplement}

\subsection{Software Package and Replication Data}

The \texttt{variance-deltas} software package may be accessed at \url{https://github.com/collin-cademartori/variance-deltas/tree/main}.
Binary distributions of stable releases are provided for many major platforms (including Windows, MacOS, and Linux). Please report any suspected bugs by opening an issue on the linked GitHub repository.

Scripts and data to reproduce the \Trees from the three examples in this paper are openly available at the Harvard Dataverse \url{https://doi.org/doi:10.7910/DVN/FVHLGG}.

\subsection{Separating Set Construction}

First, we review some basic terminology from graph theory and establish notation. An undirected graph $M = (V_M, E_M)$ consists of set of vertices $V_M$ and edges
\begin{equation}
  \label{eq:undirected_edges}
  E_M\subseteq\left\lbrace \{i, j\}\mid i,j\in V_M\right\rbrace.
\end{equation} 
A path in $M$ is any sequence of vertices $(i_1,i_2,\ldots, i_m)\in V_M^m$ such that $\{i_k, i_{k+1}\}\in E_M$ for all $1\leq k\leq m-1$ and any $m\geq 2$. For any set of vertices $F\subseteq V_M$, we define the following:
\begin{enumerate}
  \item $N_M(F)$ is the set of neighbors of $F$ in the graph $M$, given as
    \begin{equation}
      \label{eq:neighbors_set}
      N_M(F) = \left\lbrace j\in V_M\setminus F \;\Big\vert\; \{i, j\} \in E_M \text{ for some } i\in F\right\rbrace
    \end{equation}
    If $i\in V_M$ is a single vertex, then we will write $N_M(i)$ for $N_M(\{i\})$.
  \item $C_M(F)$ is the connected component of $F$ in $M$, i.e. the set of all $j\in V_M$ reachable by a path starting at some $i\in F$. This may be given explicitly as
  \begin{equation}
    \label{eq:connected_component}
    C_M(F) = \bigcup_{k=0}^{\infty} N^k_M(F),
  \end{equation}
  where $N^k_M(F) = \underbrace{N_M\circ N_M\circ\cdots \circ N_M}_{k\text{ times}}(F)$ for $k\geq 1$, and $N^0_M(F) = F$.
  \item $M - F$ is the graph with vertex set $V_{M-F}= V_M\setminus F$ and edge set
  \begin{equation}
    \label{eq:subtraction_edge_set}
    E_{M-F} = \left\lbrace \{i,j\}\in E_{M} \;\Big\vert\; i,j\in V_M\setminus F \right\rbrace.
  \end{equation}
\end{enumerate}  

Recall that a density $p(x)$ which factorizes over $\mathcal{F}$ has the useful property that conditional independence relationships between components of $x$ can be described by separating sets.
\begin{lemma}\label{lem:mrf_independence}
  Let $p(x)$ with $x\in\RR^{\pdim}$ factorize over $\mathcal{F}$, and let $M_{\mathcal{F}}$ be the corresponding Markov graph. Let $F_1$, $F_2$, and $S$ be subsets of $\{1,\ldots,\pdim\}$. Suppose that every path from a vertex $i\in F_1$ to a vertex $j\in F_2$ includes a vertex in $S$. Then we have that
  \begin{equation}
    \label{eq:global_markov_property}
    x_{F_1} \ind x_{F_2} \mid x_S.
  \end{equation}
  In this case, the set $S$ is called a \textbf{separating set} for $F_1$ and $F_2$. If there exists any separating set for $F_1$ and $F_2$, then these sets are said to be \textbf{separable}.
\end{lemma}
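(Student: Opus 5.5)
This is the global Markov property for densities that factorize over $\mathcal{F}$. The plan is to give the standard Hammersley--Clifford-direction argument: build an explicit partition from the separation hypothesis, and show that the density splits into two factors each touching only one side.

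\textbf{Step 1 (reduction).} It suffices to prove $x_{A}\ind x_{B}\mid x_S$ for sets $A\supseteq F_1\setminus S$ and $B\supseteq F_2\setminus S$ with $A\cup B\cup S=\{1,\ldots,\pdim\}$, disjoint. Marginalizing over the extra coordinates of $A$ and $B$ then gives the claim for $F_1,F_2$. Coordinates of $F_1\cap S$ or $F_2\cap S$ are already conditioned on, so they do not affect the conditional independence.

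\textbf{Step 2 (construction of the partition).} Work in $M_{\mathcal{F}}-S$. Let $A$ be the union of the connected components $C_{M_{\mathcal{F}}-S}(\{i\})$ over $i\in F_1\setminus S$, and let $B$ be the complement of $A\cup S$. By the separation hypothesis, no $j\in F_2\setminus S$ lies in $A$, so $F_2\setminus S\subseteq B$. By construction there is no edge of $M_{\mathcal{F}}$ between $A$ and $B$: such an edge would put its $B$ endpoint in a component meeting $A$.

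\textbf{Step 3 (factor splitting).} For each factor $\pindex\in\mathcal{F}$, all pairs in $\pindex$ are adjacent in $M_{\mathcal{F}}$ by the definition of the Markov graph. Hence $\pindex$ cannot meet both $A$ and $B$, so $\pindex\subseteq A\cup S$ or $\pindex\subseteq B\cup S$. Assign each factor to one of these two groups; a factor contained in $S$ may go to either. This yields
\[
p(x)=g(x_A,x_S)\,h(x_B,x_S).
\]
It follows that $p(x_A,x_B\mid x_S)\propto g(x_A,x_S)h(x_B,x_S)$ wherever $p(x_S)>0$. Integrating shows that the conditional density is the product of its two conditional marginals, which gives $x_A\ind x_B\mid x_S$.

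\textbf{Main obstacle.} The main obstacle is bookkeeping rather than depth. Two points need care. First, $F_1$ and $F_2$ may intersect $S$; the lemma does not require disjointness, so Step 1 has to dispose of overlap coordinates. Second, $F_1\cap F_2\not\subseteq S$ would contradict the hypothesis, since the trivial one-vertex path from $i$ to $i$ would avoid $S$; the statement's requirement $m\geq 2$ for paths needs a remark here. The normalization in Step 3 also requires restricting to the support where $p(x_S)>0$.
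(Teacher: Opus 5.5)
The paper gives no proof of this lemma. It invokes it as the standard global Markov property for densities that factorize over $\mathcal{F}$, with a pointer to \cite{graphical_handbook}. Your argument is exactly that standard proof, and it is correct. You take $A$ to be the union of the components of $M_{\mathcal{F}}-S$ meeting $F_1\setminus S$, note that each factor is a clique of $M_{\mathcal{F}}$ and so lies in $A\cup S$ or in $B\cup S$, and split the density. It uses neither positivity of $p$ nor connectivity of $M_{\mathcal{F}}$.

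Writing the proof out yourself, rather than citing it, exposes a hidden hypothesis. Your second caveat should be stated more strongly: it is a defect of the statement, not bookkeeping. Adopt the paper's conventions: no self-loops, as in the main-text definition of an undirected graph, and paths of $m\geq 2$ vertices, as in the supplement. Then the lemma is false as written. Take $\pdim=1$, $\mathcal{F}=\{\{1\}\}$, $F_1=F_2=\{1\}$ and $S=\emptyset$. There are no paths at all, so the hypothesis holds vacuously, yet $x_1\ind x_1$ fails for any density.

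So rather than arguing around it, add the assumption $F_1\cap F_2\subseteq S$. This is automatic if a single vertex counts as a path. It is also built into the main-text version of the fact, which requires disjoint $\pindex_1,\pindex_2\subseteq\pindext\setminus\pindex'$.

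Under that assumption your Step 2 claim $F_2\setminus S\subseteq B$ is fully justified, even with the $m\geq 2$ convention. A vertex $j\in(F_2\setminus S)\cap A$ lies outside $F_1$. So it differs from the $i\in F_1\setminus S$ whose component in $M_{\mathcal{F}}-S$ contains it. That component then supplies a genuine path from $i$ to $j$ avoiding $S$, contradicting the hypothesis.
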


A separating set $S$ for vertex sets $F_1$ and $F_2$ is called a minimal separating set if no proper subset of $S$ is a separating set. For separable $F_1$ and $F_2$, there may be many minimal separating sets. The following lemma gives one method to find a minimal separator.

\begin{lemma}\label{lem:minimal_separator}
  Let $F_1$ and $F_2$ be separable vertex sets in a connected, undirected graph $M$. Then the following are minimal separators of $F_1$ and $F_2$:
  \begin{equation}
    \label{eq:minimal_separators}
    S_M(F_1; F_2) = N_M(G_1) \cap N_{M}(F_1), \quad S_M(F_2; F_1) = N_M(G_2) \cap N_{M}(F_2),
  \end{equation}
  where $G_1 = C_{M-N_M(F_1)}(F_2)$, and $G_2 = C_{M-N_M(F_2)}(F_1)$.
\end{lemma}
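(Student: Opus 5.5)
The plan is to prove the claim for $S_M(F_1;F_2)$ directly from the definitions. The claim for $S_M(F_2;F_1)$ then follows by exchanging the roles of $F_1$ and $F_2$.

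\emph{Step 1: consequences of separability.} I will use the notion of separation from the main text, where a separating set is disjoint from the sets it separates. Under that convention, separability of $F_1$ and $F_2$ forces two facts. First, $F_1\cap F_2=\emptyset$. Second, no edge joins $F_1$ to $F_2$, since otherwise the two-vertex path between them avoids every admissible $S$. Hence $F_2\subseteq V_M\setminus N_M(F_1)$, so $G_1=C_{M-N_M(F_1)}(F_2)$ is well defined and contains $F_2$. Moreover $G_1\cap F_1=\emptyset$. To see this, note that any path in $M$ entering $F_1$ from outside $F_1$ must pass through a vertex of $N_M(F_1)$ immediately before entering. Since $F_2\cap F_1=\emptyset$, a path from $F_2$ cannot reach $F_1$ while staying in $M-N_M(F_1)$. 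Consequently $S_M(F_1;F_2)\subseteq N_M(F_1)$ is disjoint from $F_1$, and also from $G_1\supseteq F_2$, because $N_M(G_1)$ excludes $G_1$ by definition.

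\emph{Step 2: $S_M(F_1;F_2)$ separates.} Take any path $(i_1,\ldots,i_m)$ with $i_1\in F_1$ and $i_m\in F_2$. Let $(i_r,\ldots,i_m)$ be its maximal terminal segment avoiding $N_M(F_1)$. This segment is connected in $M-N_M(F_1)$ and contains $i_m\in F_2$, so it lies in $G_1$. Since $i_1\in F_1$ and $G_1\cap F_1=\emptyset$, we have $r>1$. The vertex $i_{r-1}$ cannot lie in $F_1$: if it did, its neighbor $i_r\notin F_1$ would belong to $N_M(F_1)$, a contradiction. By maximality of the segment, $i_{r-1}\in N_M(F_1)$. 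It is adjacent to $i_r\in G_1$, and it is not in $G_1$ because $G_1$ avoids $N_M(F_1)$. So $i_{r-1}\in N_M(G_1)\cap N_M(F_1)=S_M(F_1;F_2)$. Together with Lemma~\ref{lem:mrf_independence}, this gives the required separation.

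\emph{Step 3: minimality.} This is the step where the definitions must be handled carefully, and I expect it to be the main obstacle. Let $T\subsetneq S_M(F_1;F_2)$ and pick $s\in S_M(F_1;F_2)\setminus T$. Then $s$ has a neighbor $a\in F_1$ and a neighbor $g\in G_1$. Because $g$ lies in the component $G_1$, there is a path from $g$ to some $f\in F_2$ entirely inside $G_1$. Concatenating gives a path $a,s,g,\ldots,f$ from $F_1$ to $F_2$. Its only vertex in $S_M(F_1;F_2)$ is $s$: the vertex $a\in F_1$ is not in the separator by Step 1, and every vertex of $G_1$ avoids $N_M(F_1)$. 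In particular the path avoids $T$, so $T$ does not separate $F_1$ and $F_2$. Hence no proper subset of $S_M(F_1;F_2)$ is a separator, which is minimality.

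Steps 2 and 3 are routine once the consequences of separability in Step 1 are established, in particular $F_2\cap N_M(F_1)=\emptyset$ and $G_1\cap F_1=\emptyset$.
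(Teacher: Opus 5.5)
Your proof is correct and follows essentially the same route as the paper: separation comes from the last vertex of the path lying in $N_M(F_1)$, whose successor lies in $G_1$, and minimality comes from routing a path $(a,s,g,\ldots,f)$ through an omitted separator vertex $s$. Your Step 1 also makes explicit several facts the paper uses only implicitly: the disjointness convention behind ``separable,'' $F_2\cap N_M(F_1)=\emptyset$, and $G_1\cap F_1=\emptyset$, which in turn guarantees that $S_M(F_1;F_2)$ is disjoint from $F_1\cup F_2$.
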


If $M$ is connected, then Lemma \ref{lem:minimal_separator} implies a simple algorithm for computing $S_M(F_1; F_2)$ and $S_M(F_2; F_1)$. To calculate $G_1$, for instance, we can perform a depth-first search starting at any node in $F_2$, collecting all discovered vertices while ignoring all edges with a vertex in $N_M(F_1)$. Then calculating $S_M(F_1; F_2)$ from $G_1$ reduces to trivial set operations, and $S_M(F_2; F_1)$ may be computed similarly.

\begin{proof}
  We prove that $S_M(F_1;F_2)$ is a minimal separator of $F_1$ and $F_2$. The proof that $S_M(F_2;F_1)$ is a minimal separator is entirely symmetric.

  We begin by showing that $S_M(F_1;F_2)$ is a separator for $F_1$ and $F_2$. Let $p$ be any path of length $m$ from a vertex in $F_1$ to a vertex in $F_2$.
  First, we select the last vertex in $p$ which lies in $N_M(F_1)$ Specifically, define
  \begin{equation}
    \label{eq:closest_to_f1}
    \begin{split}
        I_p &= \left\lbrace 1\leq k\leq m\;\Big\vert\; p_k\in N_M(F_1) \right\rbrace,\\
        i_* &= \max I_p.
    \end{split}
  \end{equation}
  We now claim that $I_p\neq\emptyset$ and $i_*< m$. To show $I_p\neq\emptyset$, we also define $J_p =\{1\leq k\leq m\mid p_k\notin F_1\}$ and $j_*=\min J_p$. Clearly $J_p\neq \emptyset$ since $p_{m}\in F_2$ by definition, and $F_1\cap F_2=\emptyset$ since $F_1$ and $F_2$ are assumed separable. Furthermore, $j_*>1$ since $p_1\in F_1$ by definition. Thus, we must have $p_{j_*-1}\in F_1$. Since $\{p_{j_*-1},P_{j_*}\}\in E_M$, it follows that $p_{j_*}\in N_M(F_1)$, and thus $j_*\in I_p$, so $I_p\neq \emptyset$. 
  
  Furthermore, if $m\in Ip$, then there would exist some $i'\in F_1$ with $(i', p_m)\in E_M$. But then $(i', p_m)$ would be a path from $F_1$ to $F_2$, contradicting the assumption that $F_1$ and $F_2$ are separable. Therefore, $i^*< m$, as claimed.

  Now consider the truncated path $p' = (p_{i_*+1}, \ldots, p_{ m })$. Note that $p_{m }\in F_2$, and that $p'_k\notin N_M(F_1)$ for all $1\leq k\leq |p'|$. If this were not true, there would exist a $k_*>i_*$ such that $p_{k_*}\in N_M(F_1)$, which contradicts the definition of $i_*$. In other words, $p'_1$ is connected to $F_2$ by a path which does not pass through $N_M(F_1)$, and thus $p'_1\in C_{M-N_M(F_1)}(F_2)$. Since $p'_1 = p_{i_*+1}$, this implies that $p_{i_*}\in N_M(C_{M-N_M(F_1)}(F_2))$.

  Putting these pieces together, we conclude that
  \begin{equation}
    p_{i_*} \in N_M(F_1)\cap N_M(C_{M-N_M(F_1)}(F_2)) = N_M(F_1)\cap N_M(G_1),
  \end{equation}
  and therefore, since $p$ was an arbitrary path from $F_1$ to $F_2$, $N_M(F_1)\cap N_M(G_1)$ must separate $F_1$ and $F_2$.

  To show that this set is also a minimal separator, let $H\subset N_M(F_1)\cap N_M(G_1)$. Then let $j'\in (N_M(F_1)\cap N_M(G_1))\setminus H$. Because $j'\in N_M(F_1)$, there exists $i'\in F_1$ such that $\{i',j'\}\in E_M$. Note also that $i'\notin H$ since $i'\notin N_M(F_1)$. Furthermore, since $j'\in N_M(C_{M-N_M(F_1)}(F_2))$, there exists a path $p'$ (of length $m'$) such that $\{j', p'_1\}\in E_M$, $p'_m\in F_2$, and $p'_k\notin N_M(F_1)$ for all $1\leq k\leq m'$. The latter property implies in particular that $p'_k\notin H$ for all $1\leq k\leq m$. Putting this all together, $(i', j', p'_1,\ldots p'_m)$ is a path from $F_1$ to $F_2$ which does not include any vertex in $H$, showing that $H$ is not a separator for $F_1$ and $F_2$. Since $H$ was an arbitrary subset of $N_M(F_1)\cap N_M(G_1)$, we conclude that this set is in fact a minimal separator of $F_1$ and $F_2$, as claimed.
\end{proof}

Finally, the separating set function $S^*_M(F_1; F_2)$ used in Algorithm 2 is produced from $S_M(F_1; F_2)$ as
\begin{equation*}
  S^*_M(F_1; F_2) = \begin{cases}
    S_M(F_1; F_2), & F_1,F_2 \text{ separable},\\
    \emptyset, & \text{otherwise},
  \end{cases}
\end{equation*}
and similarly for $S^*_M(F_2; F_1)$.

\subsection{Proof that Algorithm \ref{alg:markov_chain} Produces Markov Chains}\label{sec:proof_alg_markov}

Rather than directly prove that Algorithm \ref{alg:markov_chain} yields valid Markov chains, we prove a slightly stronger result. We start with a definition.

\begin{definition}[Separating paths and trees]
  \label{def:sep_tree}
  Let $\pichain = \left( \pindex_1,\pindex_2\,\ldots,\pindex_k \right)$ be such that $\pindex_i\subseteq\pindext$ for all $1\leq i\leq k$. Let $M$ be a connected, undirected graph with $V_M = \pindext$. Then $\pichain$ is a \textbf{separating chain} with respect to $M$ if, for all $1\leq j\leq k$,
  \begin{equation}
    \label{eq:separting_tree_constraint}
    C_{M-\pindex_j}\left(\pindex_{\leq j}\setminus \pindex_j\right) \cap (\pindex_{\geq j}\setminus \pindex_j) = \emptyset,
  \end{equation}
  where $\pindex_{\leq j} = \cup_{i=1}^{j}\pindex_i$ and $\pindex_{\geq j} = \cup_{i=j}^k \pindex_i$. In words, each component $\pindex_j$ of $\pichain$ must separate its ancestors and descendants into distinct connected components in $M$.

  Let $\mathcal{F}$ be a factorization with connected Markov graph $M_{\mathcal{F}}$. Let $p$ factorize over $\mathcal{F}$, and let $T$ be a \DTree for $p$.
  If, for all paths $\left( \node_1,\ldots, \node_k \right)$ in $T$, $\left( \pindex(\node_1),\pindex(\node_2),\ldots,\pindex(\node_k) \right)$ is a separating chain with respect to $M_{\mathcal{F}}$, then $T$ is a \textbf{separating tree} for $p$ and $\mathcal{F}$.
\end{definition}

For $T$ to be a separating tree, it is sufficient that the property \eqref{eq:separting_tree_constraint} holds for all maximal paths in $T$.
We note that, in light of Lemma \ref{lem:mrf_independence}, the separating property \eqref{eq:separting_tree_constraint} implies the Markov property \eqref{eq:markov_path_condition} required for \DTreesNS.
We prove that Algorithm \ref{alg:markov_chain} produces valid separating chains, and it will follow immediately from this that it also produces valid Markov chains. First, we need the following lemma.

\begin{lemma}
  \label{lem:s_paths}
  Let $M$ be a connected, undirected graph with $V_M=\pindext$.
  For any separable $\pindex_1, \pindex_2\subset \pindext$, let $\pindex_S$ denote either $\minsep{\pindex_1}{\pindex_2}$ or $\minsep{\pindex_2}{\pindex_1}$. Then for any vertex $i\in\pindex_S$, there exist paths $p^1$ and $p^2$ connecting $i$ to $\pindex_1$ and $\pindex_2$ respectively, such that $p^1_j\notin \pindex_2$ and $p^2_k\notin \pindex_1$ for all $1\leq j\leq |p^1|$ and $1\leq k\leq |p^2|$.
\end{lemma}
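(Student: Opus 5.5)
By the symmetry of the definitions, it suffices to treat $\pindex_S = \minsep{\pindex_1}{\pindex_2} = S_M(\pindex_1;\pindex_2) = N_M(\pindex_1)\cap N_M(G_1)$, where $G_1 = C_{M-N_M(\pindex_1)}(\pindex_2)$. Since $\pindex_1,\pindex_2$ are separable, Lemma \ref{lem:minimal_separator} applies. The unstarred set $S_M$ coincides with $S^*_M$ in this case. The two required paths are read off directly from the two intersected neighborhoods, exactly as in the minimality half of the proof of Lemma \ref{lem:minimal_separator}.

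\textbf{Path to $\pindex_1$.} Fix $i\in\pindex_S$. Because $i\in N_M(\pindex_1)$, there is some $i'\in\pindex_1$ with $\{i,i'\}\in E_M$. We take $p^1=(i,i')$. Its first vertex $i$ is not in $\pindex_2$: otherwise $i'$ and $i$ would be adjacent vertices of $\pindex_1$ and $\pindex_2$, and no set disjoint from both could separate them. Here we use, as the paper does, that separators are subsets of $\pindext\setminus(\pindex_1\cup\pindex_2)$. Its second vertex $i'$ lies in $\pindex_1$, and $\pindex_1\cap\pindex_2=\emptyset$ by separability, so $i'\notin\pindex_2$.

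\textbf{Path to $\pindex_2$.} Because $i\in N_M(G_1)$, there is a vertex $g\in G_1$ adjacent to $i$. By the definition of the connected component $G_1$, there is a path $q$ from $g$ to some vertex of $\pindex_2$ lying entirely in $M-N_M(\pindex_1)$, so every vertex of $q$ avoids $N_M(\pindex_1)$. We set $p^2=(i,q_1,\ldots,q_{|q|})$. We must check that no vertex of $p^2$ lies in $\pindex_1$. For $i$ this holds because $i\in N_M(\pindex_1)$, and $N_M(\pindex_1)$ excludes $\pindex_1$ by the definition of $N_M$. For the vertices of $q$, suppose some $q_k\in\pindex_1$, and take the smallest such $k$.
\begin{itemize}
\item If $k=1$, then $g\in\pindex_1$ is the neighbor of $i$; this is not immediately contradictory, so it needs an argument.
\item If $k>1$, then $q_{k-1}\notin\pindex_1$ is adjacent to $q_k\in\pindex_1$, so $q_{k-1}\in N_M(\pindex_1)$. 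This contradicts that $q$ avoids $N_M(\pindex_1)$.
\end{itemize}

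\textbf{Main obstacle.} The main delicate point is showing that $G_1$ itself contains no vertex of $\pindex_1$, which is what is needed to rule out $k=1$ (that is, $g\in\pindex_1$). Suppose $G_1\cap\pindex_1\neq\emptyset$. Then some path from $\pindex_2$ to $\pindex_1$ avoids $N_M(\pindex_1)$. The first vertex of this path that lies in $\pindex_1$ has a predecessor. That predecessor is either in $N_M(\pindex_1)$, which is a contradiction, or in $\pindex_1$, which contradicts the choice of first vertex. The only exception is when the path's starting vertex, which lies in $\pindex_2$, is itself in $\pindex_1$; this is excluded by disjointness. Applying the same reasoning to $q$ gives $q_k\notin\pindex_1$ for all $k$.

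Combining the two constructions gives the required $p^1$ and $p^2$. The case $\pindex_S=\minsep{\pindex_2}{\pindex_1}$ follows by interchanging the roles of $\pindex_1$ and $\pindex_2$.
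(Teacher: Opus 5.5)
Your proof is correct and follows the paper's argument essentially step for step. As in the paper, $p^1$ is the single edge from $i$ to a neighbor in $\pindex_1$, and $p^2$ prepends $i$ to a path inside $G_1 = C_{M-N_M(\pindex_1)}(\pindex_2)$. The paper only asserts $i\notin\pindex_2$ and $G_1\cap\pindex_1=\emptyset$, whereas you justify them explicitly, via non-adjacency of separable sets and the first-entry argument respectively.
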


\begin{proof}
  We prove the lemma for $\pindex_S=\minsep{\pindex_1}{\pindex_2}$, as the proof for the other case is entirely symmetric. Let $i\in \pindex_S$. Because $i\in N_M\left( \pindex_1 \right)$, there exists some $i'\in \pindex_1$ such that $\{i, i'\}\in E_{M}$. In fact, since $(\pindex_1\cup\pindex_S)\cap\pindex_2=\emptyset$, we have $\{i, i'\}\in E_{M-\pindex_2}$. Thus, $p^1 = (i, i')$ is a path from $i$ to $\pindex_1$ in $M-\pindex_2$.

  We also have $i\in N_M\left( C_{M-N_M(\pindex_1)}(\pindex_2) \right)$, so there exists some $i''\in  C_{M-N_M(\pindex_1)}(\pindex_2)\subseteq M-\pindex_1$ such that $\{i, i''\}\in E_{M-\pindex_1}$. Furthermore, by definition of the connected component, there exists some path $\widetilde{p}$ of length $k$ with $\widetilde{p}_1 = i''$, $\widetilde{p}_k\in \pindex_2$, and $\widetilde{p}_j\notin \pindex_1$ for all $1\leq j\leq k$ (since $\pindex_1$ and $\pindex_2$ are disconnected in $M-N_M(\pindex_1)$). Letting $\circ$ denote the composition of paths, it follows that $p^2 = (i, i'') \circ \widetilde{p}$ is a path in $M-\pindex_1$ connecting $i$ to $\pindex_2$, as required. 
\end{proof}

With this, we can now proceed with the proof that Algorithm \ref{alg:markov_chain} produces separating chains.

\begin{lemma}\label{lem:alg1_okay}
  The output of algorithm \ref{alg:markov_chain} is always a separating chain.
\end{lemma}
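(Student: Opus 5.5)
I will argue by induction on the iterations of the while loop in Algorithm \ref{alg:markov_chain}. The invariant maintained is the following. At the start of each iteration, the current chain $\pichain=(\pindex_1,\ldots,\pindex_{k})$ is a separating chain with respect to $M$, in the sense of \eqref{eq:separting_tree_constraint}. In addition, the ``frontier'' sets satisfy $F_1=\pindex_{\leq i^*}$ and $F_2=\pindex_{> i^*}$, i.e. $F_1$ is the union of the first $i^*$ components and $F_2$ the union of the rest. Finally, $F_1$ and $F_2$ are disjoint.

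The base case is $\pichain=(\pindex_s,\pindex_e)$ with $i^*=1$. For $j=1$, the left side of \eqref{eq:separting_tree_constraint} involves $\pindex_{\leq 1}\setminus\pindex_1=\emptyset$. For $j=2$, the right side involves $\pindex_{\geq 2}\setminus\pindex_2=\emptyset$. So the condition holds trivially, and we only need $\pindex_s\cap\pindex_e=\emptyset$, which I take as an implicit requirement on the inputs. The output is the chain at whichever point the loop exits, whether via the loop condition or via the \texttt{break}, and in either case it is a chain for which the invariant already holds. So it suffices to show that one non-breaking iteration preserves the invariant.

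For the inductive step, suppose that without loss of generality $C_1\le C_2$, so that $S=S_1=\minsep{F_1}{F_2}$ is inserted immediately after position $i^*$; the other case is symmetric. If $S$ is empty, the insertion changes nothing substantive. Note also that the loop condition $N_M(F_1)\cap F_2=\emptyset$ together with connectivity of $M$ should give separability, since $N_M(F_1)$ itself then separates $F_1$ from $F_2$. By Lemma \ref{lem:minimal_separator}, $S\subseteq N_M(F_1)$ separates $F_1$ from $F_2$, and $S$ is disjoint from both sets. The new frontiers are therefore still disjoint.

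It remains to verify \eqref{eq:separting_tree_constraint} for every index $j$ of the new chain. There are three kinds of index.
\begin{itemize}
\item \textbf{The new component $S$.} We need $C_{M-S}(F_1)\cap F_2=\emptyset$. This is exactly the statement that $S$ separates $F_1$ and $F_2$.
\item \textbf{An old component $\pindex_j$ with $j\le i^*$.} The ancestor set is unchanged, and the descendant set grows by $S$. Suppose some vertex $v\in S$ were reachable from $\pindex_{<j}$ in $M-\pindex_j$. By Lemma \ref{lem:s_paths}, $v$ is joined to $F_2$ by a path avoiding $F_1$, and in particular avoiding $\pindex_j$. Concatenating the two paths would connect $\pindex_{<j}$ to $F_2\subseteq\pindex_{>j}$ in $M-\pindex_j$, contradicting the induction hypothesis.
\item \textbf{An old component $\pindex_j$ with $j>i^*$.} Now the ancestor set gains $S$. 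Symmetrically, Lemma \ref{lem:s_paths} gives a path from any $v\in S$ to $F_1$ that avoids $F_2\supseteq\pindex_j$. So a path from $v$ to $\pindex_{>j}$ in $M-\pindex_j$ would again yield a forbidden path from $\pindex_{<j}\subseteq F_1$ to $\pindex_{>j}$.
\end{itemize}

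The main obstacle is bookkeeping rather than any single inequality. The delicate points are these. First, the paths supplied by Lemma \ref{lem:s_paths} must avoid the specific component $\pindex_j$, and not merely the opposite frontier. This works because $\pindex_j\subseteq F_1$ or $\pindex_j\subseteq F_2$, which holds thanks to the maintained identification of $F_1$ and $F_2$ with prefix and suffix unions. Second, one must check that the update of $i^*$ in the algorithm keeps this prefix/suffix identification correct after insertion. Once the chain is shown to be separating, Lemma \ref{lem:mrf_independence} yields the Markov property \eqref{eq:markov_path_condition} as remarked above.
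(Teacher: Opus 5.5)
Your proposal is correct and is essentially the paper's proof: induction over loop iterations, Lemma \ref{lem:minimal_separator} for the newly inserted set, and Lemma \ref{lem:s_paths} with path concatenation to rule out new connections for the old components on either side of the insertion point; your explicit invariant ($F_1=\pindex_{\leq i^*}$, $F_2=\pindex_{>i^*}$, disjoint) and your remark that the loop guard forces separability make precise bookkeeping that the paper leaves implicit. One small slip: in the case $j>i^*$, the inclusion the concatenation argument uses is $F_1\subseteq\pindex_{<j}$ (together with $F_1\cap\pindex_j=\emptyset$), not $\pindex_{<j}\subseteq F_1$, which is false in general.
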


\begin{proof}
  We prove the slightly stronger claim that Algorithm \ref{alg:markov_chain} produces valid separating chains at the end of each iteration, proceeding by induction on the iteration.
  In the base case (i.e. first iteration), $\pichain=\left( \pindex_s,\pindex_e \right)$, which trivially satisfies \eqref{eq:separting_tree_constraint}. 
  
  Next, assume that at the end of the iteration $k\geq 2$, we have valid separating chain $\pichain'=\left( \pindex_1,\pindex_2,\ldots,\pindex_k \right)$. Let $i^{*}\in\{1,\ldots,k\}$ be the insertion point after iteration $k$, and define $\pindex^l=\cup_{i=1}^{i^*}\pindex_i$ and $\pindex^r = \cup_{i=i^*+1}^k\pindex_i$. By construction $F_1 = \pindex^l$ and $F_2 = \pindex^r$. If $\pindex^l$ and $\pindex^r$ are not separable, then Algorithm \ref{alg:markov_chain} outputs $\pichain'$, and we are done.
  
  If $\pindex^l$ and $\pindex^r$ are separable, let $\pichain$ be the chain resulting from applying one iteration of the while loop in Algorithm \ref{alg:markov_chain} to $\pichain'$, $F_1=\pindex^l$, $F_2=\pindex^r$, and $i^*$. Then it is easy to see that
  \begin{equation}
    \pichain = \left( \pichain'_1,\ldots, \pichain'_{i^*}, \pindex_S, \pichain'_{i^*+1}, \ldots, \pichain'_{k} \right),
  \end{equation}
  where $\pindex_S$ is either $\minsep{\pindex^l}{\pindex^r}$ or $\minsep{\pindex^r}{\pindex^l}$. For $j=i^*+1$, it is clear that $\pichain$ satisfies \eqref{eq:separting_tree_constraint} by Lemma \ref{lem:minimal_separator}.

  Next, suppose that $j\leq i^*$. Let $\pindex'_S=\pichain_j=\pichain'_j$. Then we must show that $\pindex'_S$ separates $\pichain_{<j}=\pichain'_{<j}$ and $\pichain_{>j}=\pichain'_{>j}\cup \pindex_S$. Clearly, $\pindex'_S$ separates $\pichain'_{<j}$ and $\pichain'_{>j}$ since $\pichain'$ already satisfies \eqref{eq:separting_tree_constraint}. Thus, it suffices to show that there is no path from $\pichain_{<j}$ to $\pindex_S$ that does not pass through $\pindex'_S$.

  If $\pichain_{<j}=\emptyset$, we are done. Otherwise, we demonstrate a contradiction. Suppose that there exists such a path $p$ originating at a vertex $j_1\in \pichain_{<j}=\pichain'_{<j}$ and terminating in some vertex $j_2\in\pindex_S$ with no vertices in $\pindex'_S$. Then by Lemma \ref{lem:s_paths} above, it follows that there exists another path $p'$ originating at $j_2$ and terminating in $\pichain'_{> i^*} \subseteq \pichain'_{> j}$ such that $p'$ does not pass through $\pichain'_{\leq i^*}\supset \pindex'_S$. It immediately follows that $p\circ p'$ is a path originating in $\pichain'_{<j}$ and terminating in $\pichain'_{> j}$ which does not pass through $\pindex'_S=\pichain'_j$, where $\circ$ denotes the composition of paths. But this contradicts the assumption that $\pichain'$ satisfied the separating condition \eqref{eq:separting_tree_constraint} for all $1\leq j\leq k$.
  
   Therefore, there are no paths from $\pichain_{<j}$ to $\pindex_S$ which do not pass through $\pindex'_S$, and thus $\pindex_j$ separates $\pichain_{< j}$ and $\pichain_{>j}$. The proof for $j>i^*+1$ follows in exactly the same fashion, and we conclude that $\pichain$ is again a separating chain. It then follows by induction that Algorithm \ref{alg:markov_chain} produces valid separating chains.
\end{proof}

\subsection{Proof that Algorithm \ref{alg:sensitivity_tree_from_chains_simple} Produces Valid \DDTreesNS}

As in the previous section, we prove a slightly stronger result -- that Algorithm \ref{alg:sensitivity_tree_from_chains_simple} produces valid separating trees. For this, it suffices to show the following two lemmas.

\begin{lemma}\label{lem:tree_promotion_valid}
  For every separating chain $\pichain$ starting at $\{i_{\qoi}\}$, $\mathrm{Tree}(\pichain)$ is a valid separating tree.
\end{lemma}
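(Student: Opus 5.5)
We prove the lemma by checking the clauses of Definition \ref{def:sensitivity_tree} and Definition \ref{def:sep_tree} directly for $\mathrm{Tree}(\pichain)$, with $\pichain=(\pindex_1,\ldots,\pindex_k)$ and $\pindex_1=\{i_{\qoi}\}$. By \eqref{eq:promotion_def}, $\mathrm{Tree}(\pichain)$ has nodes $\node_1,\ldots,\node_k$ with $\pindex(\node_i)=\pindex_i$ and parent map $P(\node_{i+1})=\node_i$. The associated undirected graph is therefore a simple path, which is connected and acyclic, so $P$ defines a tree rooted at $\node_1$.

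We then verify the three structural constraints in turn.
\begin{enumerate}
  \item The root condition holds because $\pindex(\node_1)=\pindex_1=\{i_{\qoi}\}$.
  \item Unique children holds vacuously: every node has at most one child, so no two distinct nodes share a parent.
  \item The separating property requires more work. A path in $\mathrm{Tree}(\pichain)$ is a sequence $(\node_a,\node_{a+1},\ldots,\node_b)$ of consecutive nodes, since each $\node_{i+1}$ has parent $\node_i$. So we must show that every contiguous subchain $(\pindex_a,\ldots,\pindex_b)$ of a separating chain is itself a separating chain.
\end{enumerate}
The paper notes that it suffices to check maximal paths. We will nevertheless argue the subchain claim directly, because the tree's paths include non-maximal ones and the direct argument settles them all at once.

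The subchain claim is the only step with content. Fix $a\le j\le b$. Within the subchain, the ancestor set is $\bigcup_{i=a}^{j}\pindex_i\setminus\pindex_j$, which is contained in $\pindex_{\le j}\setminus\pindex_j$. Likewise the descendant set $\bigcup_{i=j}^{b}\pindex_i\setminus\pindex_j$ is contained in $\pindex_{\ge j}\setminus\pindex_j$. The connected component operator is monotone: if $A\subseteq A'$ then $C_{M-\pindex_j}(A)\subseteq C_{M-\pindex_j}(A')$. Hence
\[
C_{M-\pindex_j}\Big(\textstyle\bigcup_{i=a}^{j}\pindex_i\setminus\pindex_j\Big)\cap\Big(\textstyle\bigcup_{i=j}^{b}\pindex_i\setminus\pindex_j\Big)\subseteq C_{M-\pindex_j}(\pindex_{\le j}\setminus\pindex_j)\cap(\pindex_{\ge j}\setminus\pindex_j)=\emptyset,
\]
where the final equality is \eqref{eq:separting_tree_constraint} for the full chain $\pichain$. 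This gives the separating condition for every subchain.

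Finally, we confirm that $\mathrm{Tree}(\pichain)$ is a \DTreeNS, so that the separating-tree definition applies. This is the Markov path condition \eqref{eq:markov_path_condition}. For each path and each $i$, we need $\npset(\node_i)$ to separate $\npset(\node_{<i})$ from $\npset(\node_{>i})$ in the appropriate sense. The separating property just shown gives exactly this: every path in $M$ from $\pindex_{\le i}\setminus\pindex_i$ to $\pindex_{\ge i}\setminus\pindex_i$ meets $\pindex_i$. Lemma \ref{lem:mrf_independence} then yields
\[
\npset(\node_{\le i})\ind\npset(\node_{>i})\mid\npset(\node_i),
\]
because the components already in $\pindex_i$ are conditioned on and therefore trivially independent given themselves.

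The main obstacle is minor bookkeeping. Overlaps between the $\pindex_i$ mean the Markov condition must be phrased with $\pindex_i$ removed from both sides, and the conditioning must absorb the shared coordinates; we handle this by the standard fact that $X\ind Y\mid Z$ is unaffected by adding components of $Z$ to $X$ or $Y$.
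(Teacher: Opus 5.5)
Your proof is correct and follows the same route as the paper: the root constraint holds by construction, unique children holds vacuously, and the separating condition is inherited from $\pichain$ and yields the Markov condition via Lemma \ref{lem:mrf_independence}. Your monotonicity argument for contiguous subchains fills in the step the paper calls ``immediate'' (paths in $\mathrm{Tree}(\pichain)$ need not be rooted or maximal). Absorbing the coordinates of $\pindex_i$ into the conditioning then correctly completes the passage from separation to \eqref{eq:markov_path_condition}.
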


\begin{proof}
  To show that $\mathrm{Tree}(\pichain)$ is a separating tree, we have to verify the constraints in Definition \ref{def:sensitivity_tree}, as well as the separating condition \eqref{eq:separting_tree_constraint}. Let $\node_0$ be the root node of $\mathrm{Tree}(\pichain)$. By construction, we have that $\npset(\node_0) = \qoi$, so the first constraint is satisfied. The second constraint of Definition \ref{def:sensitivity_tree} follows from the separating tree condition \ref{eq:separting_tree_constraint}, and this is immediately satisfied for  $\mathrm{Tree}(\pichain)$ since $\pichain$ is a separating chain by assumption. Constraint 3 is trivially satisfied since each node has at most one child node.
\end{proof}

\begin{lemma}\label{lem:grafting_valid}
  For every separating chain $\pichain$ with $\pichain_1=\{i_{\qoi}\}$ and every separating tree $T$ rooted at $\qoi$, $\underset{\pichain}{\mathrm{Graft}}(T)$ is a valid separating tree.
\end{lemma}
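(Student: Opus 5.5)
The plan is to check the three constraints of Definition \ref{def:sensitivity_tree} for $T' = \underset{\pichain}{\mathrm{Graft}}(T)$, and then the separating condition \eqref{eq:separting_tree_constraint} on every maximal path of $T'$. As noted after Definition \ref{def:sep_tree}, it suffices to treat maximal paths, and the separating condition yields the Markov constraint by Lemma \ref{lem:mrf_independence}.

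First, $T'$ is a tree. Each new node $\node^{\mathrm{new}}_j$ receives exactly one parent, namely $\node^*$ or $\node^{\mathrm{new}}_{j-1}$. The new nodes form a chain hanging from the existing vertex $\node^*$, so the undirected graph stays connected and acyclic. The root of $T'$ is unchanged, so constraint 1 is inherited from $T$.

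Second, the maximal rooted paths of $T'$ split into two kinds. The first kind are maximal paths of $T$ that do not end at $\node^*$; these are untouched. The second kind is the single new path $(\node_1,\ldots,\node_m,\node^{\mathrm{new}}_1,\ldots,\node^{\mathrm{new}}_{k-m})$. Its index sets are exactly $(\pindex_1,\ldots,\pindex_k)=\pichain$, because $\pindex(\node_i)=\pindex_i$ for $i\le m$ by the choice of $\node^*$, and $\pindex(\node^{\mathrm{new}}_j)=\pindex_{m+j}$ by construction. Since $\pichain$ is a separating chain by hypothesis, this path satisfies \eqref{eq:separting_tree_constraint}. If $\node^*$ was a leaf of $T$, the old maximal path through it becomes a prefix of the new one. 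Prefixes of separating chains are again separating, because shrinking $\pindex_{\le j}$ or $\pindex_{\ge j}$ only shrinks the sets in \eqref{eq:separting_tree_constraint}. The same monotonicity handles non-maximal paths, as well as paths not starting at the root, which are contiguous sub-chains. Hence every path of $T'$ is separating, and therefore Markov, giving constraint 2.

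Third, uniqueness of children, which I expect to be the only delicate point. Children of nodes other than $\node^*$ and of the new nodes are either unchanged or unique, since each new node has at most one child. At $\node^*$, the only added child is $\node^{\mathrm{new}}_1$, with $\pindex(\node^{\mathrm{new}}_1)=\pindex_{m+1}$. If some existing child $\node''$ of $\node^*$ had $\pindex(\node'')=\pindex_{m+1}$, then $(\node_1,\ldots,\node_m,\node'')$ would be a longer rooted path matching $\pichain$, contradicting the maximality of $m$. So uniqueness holds. The degenerate case $m=k$ adds no nodes and gives $T'=T$, which is trivially fine.
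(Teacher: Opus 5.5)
Your proposal is correct and follows essentially the same route as the paper. The root is inherited from $T$. The separating condition \eqref{eq:separting_tree_constraint} holds because every path either lies in $T$ or has index sets forming a contiguous sub-chain of $\pichain$; you organize this through maximal paths plus monotonicity of \eqref{eq:separting_tree_constraint} under sub-chains, which the paper uses implicitly. Unique children at $\node^*$ follows from maximality of the matching rooted path. Your explicit checks that $T'$ is still a tree and that the case $m=k$ is trivial are small additions the paper leaves implicit.
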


\begin{proof}
We first note that the graft operation cannot modify the root node of $T$ since any new nodes are added as children of existing nodes. Since $T$ is already a valid \DTree for $\qoi$, the first constraint is thus satisfied. 

Next suppose that $\nodepath = (\node_1,\ldots,\node_k)$ is a path in $\underset{\pichain}{\mathrm{Graft}}(T)$. If $\nodepath_j\in T$ for all $1\leq j\leq k$, then $\nodepath$ is also a path in $T$ (since all nodes are in $T$ and grafting does not remove any branches from $T$). Because $T$ is a valid separating tree, $\nodepath$ must satisfy the condition \eqref{eq:separting_tree_constraint}. Now suppose that $\nodepath_j$ is not a node in $T$ for some $j\geq 1$. Then it follows from the definition of $\underset{\pichain}{\mathrm{Graft}}(T)$ that $(\pindex(\nodepath_1),\ldots,\pindex(\nodepath_k))$ must be a sub-chain of $\pichain$, i.e.
\[
(\pindex(\nodepath_1),\ldots,\pindex(\nodepath_k)) = \left( \pichain_m, \pichain_{m+1}, \ldots, \pichain_n \right)
\]
for some $1\leq m\leq n\leq |\pichain|$ with $n-m+1 = k$. But then $\nodepath$ satisfies the separating condition \eqref{eq:separting_tree_constraint} again since $\pichain$ is a separating chain by assumption.

Finally, the nodes introduced into $T$ by the operation $\underset{\pichain}{\mathrm{Graft}}(T)$ are all the unique children of their parents except for the node $\nodepath_1^{\mathrm{new}}$, which has parent node $\node^*$ in $T$. Recall that, by definition of the grafting operation, $\node^*$ is the terminal node of the longest rooted path $\nodepath^*$ such that $\pindex(\nodepath_i)=\pichain_i$ for all $1\leq i\leq |\nodepath^*|$.
Now suppose that there exists another child $\node'$ of $\node^*$ for which $\pindex(\nodepath_1^{\mathrm{new}})=\pindex(\node')$. Then it is easy to see that $\node'$ must be in $T$. In this case, $\nodepath'= \nodepath^*\circ \node'$ would be a rooted path in $T$ for which $\pindex(\nodepath'_i)=\pichain_i$ for all $1\leq i\leq |\nodepath'|$, where $\circ$ denotes the composition of paths. But $\nodepath'$ is longer than $\nodepath*$ by construction, and $\node^*$ is not the terminal node of $\nodepath'$. This contradicts the definition of $\node^*$, and thus we can conclude that $\node^*$ cannot have any other children $\node'$ with $\pindex(\node')=\pindex(\nodepath^{\mathrm{new}}_1)$. Thus, $\underset{\pichain}{\mathrm{Graft}}(T)$ satisfies constraint 3, and is a valid \DTree for $\qoi$, as claimed.
\end{proof}

With these lemmas, we can quickly prove that Algorithm \ref{alg:sensitivity_tree_from_chains_simple} produces valid \DTrees

\begin{proof}
  By Lemma \ref{lem:alg1_okay}, the $\pichain_i$ in Algorithm \ref{alg:sensitivity_tree_from_chains_simple} are all separating chains, and Lemma \ref{lem:tree_promotion_valid} therefore tells us that $\mathrm{Tree}(\pichain_1)$ is a separating tree. Finally, Lemma \ref{lem:grafting_valid} implies that, for $j=2,\ldots l$, $\underset{\pichain_j}{\mathrm{Graft}}(T)$ is a separating tree as well, completing the proof.
\end{proof}

\subsection{Proof of Lemma \ref{lem:st_preserved}}

\begin{proof}
  \textbf{Branching preserves separating \DTree constraints.}
  Because all nodes in $T$ are nonempty, branching cannot modify the root node. Thus, if $T$ is rooted at $\qoi$, $\branch{\node}{\pindex'}{T}$ will also be rooted at $\qoi$. To check the separating paths condition \eqref{eq:separting_tree_constraint}, we note that $\branch{\node}{\pindex'}{T}$ adds only a single node $\node'$ to $T$, and thus we need only check this condition for paths terminating in $\node'$. Let $\nodepath=\left( \node_1,\ldots, \node_k \right)$ be a path with $\node_k=\node'$. If $j=k$, then
  \[
  \pindex(\nodepath_{\geq j})\setminus \pindex(\nodepath_j)=\emptyset,
  \]
  and condition \eqref{eq:separting_tree_constraint} is trivially satisfied. Now define the path $\nodepath' = \left( \node_1,\ldots, \node_{k-1} \right)$. If $1\leq j\leq k-1$, then $\pindex\left(\nodepath_{\geq j}\right)=
\pindex(\nodepath'_{\geq j})$ since, by definition, $\pindex(\node_k) \subseteq \pindex(\node_{k-1})$. Clearly, we also have $\pindex(\nodepath_{\leq j})=\pindex(\nodepath'_{\leq j})$, so $\nodepath$ satisfies \eqref{eq:separting_tree_constraint} for $1\leq j\leq k-1$ if and only if $\nodepath'$ does. But $\nodepath'$ is a path in $T$, so $\nodepath'$ must satisfy \eqref{eq:separting_tree_constraint}. Finally, constraint 3 is satisfied by the definition of $\branch{\node}{\pindex'}{T}$.

  \textbf{Subdividing preserves separating \DTree constraints.}
  As with branching, constraints 1 and 3 are trivially satisfied by the definition of subdivision. In particular, subdivision cannot modify the root node of a tree, and does not add a node if the parent $\node_p$ already has a child $\node'$ with $\pindex(\node')=\pindex_s$ (where $\pindex_s$ is the subdividing set $\pindex_s = \pindex_p'\cup\pindex(\node_c)$ for some $\pindex_p'\subseteq \pindex(\node_p)$). Thus, we just need to verify that the separating paths condition \eqref{eq:separting_tree_constraint} is preserved. 
  
  Let $\node_s$ be the child of $\node_p$ in  $\subdivide{\node_p}{\pindex_p'}{\node_c}{T}$ with $\pindex_s = \pindex_p'\cup\pindex(\node_c)$. (This child node is unique since we know the subdivided tree still satisfies constraint 3 in Definition \ref{def:sensitivity_tree}.) It is easy to verify that any maximal path in $\subdivide{\node_p}{\pindex_p'}{\node_c}{T}$ not containing $\node_s$ and $\node_c$ is also a path in $T$, so it suffices to verify \eqref{eq:separting_tree_constraint} for maximal paths in $\subdivide{\node_p}{\pindex_p'}{\node_c}{T}$ containing $\node_s$ and $\node_c$.
  For any such path $\nodepath=\left( \node_1,\ldots,\node_k \right)$ of length $k$, let $i^*$ be the index such that $\node_{i^*} = \node_s$. Furthermore, define $\nodepath'=\left( \node_1,\ldots, \node_{i^*-1}, \node_{i^*+1},\ldots \node_k \right)$. Then $\nodepath'$ is a path in $T$, and thus $\nodepath'$ satisfies \eqref{eq:separting_tree_constraint} for all $1\leq j\leq k$. Our analysis is now divided into two cases.

  First, suppose $j\neq i^*$. If $j<i^*$, $\nodepath_{j} = \nodepath'_j$, $\pindex(\nodepath_{\leq j}) = \pindex(\nodepath'_{\leq j})$, and $\pindex(\nodepath_{\geq j}) = \pindex(\nodepath'_{\geq j})\cup \pindex(\node_s)$. But if $j<i^*$, then $\pindex(\nodepath'_{\geq j})\supset \pindex(\node_p)\supset\pindex_p'$ and $\pindex(\nodepath'_{\geq j})\supset \pindex(\node_c)$, so $\pindex(\nodepath'_{\geq j})\supset\pindex(\node_s)$. We thus conclude that $\pindex(\nodepath_{\geq j}) = \pindex(\nodepath'_{\geq j})\cup \pindex(\node_s)= \pindex(\nodepath'_{\geq j})$. By analogous reasoning, we can show that if $j>i^*$, then $\nodepath_j=\nodepath'_{j-1}$, $\pindex(\nodepath_{\geq j}) = \pindex(\nodepath'_{\geq j-1})$, and $\pindex(\nodepath_{\leq j}) = \pindex(\nodepath'_{\leq j-1})\cup \pindex(\node_s) = \pindex(\nodepath'_{\leq j-1})$. Since condition \eqref{eq:separting_tree_constraint} holds for $\nodepath'$ for all $1\leq j\leq k-1$, it therefore also holds for $\nodepath$ for all $j\neq i^*$.

  Next, consider $j=i^*$. In this case, we find that $\pindex(\nodepath_{\leq j}) = \pindex(\nodepath'_{\leq j})$, $\pindex(\nodepath_{\geq j}) = \pindex(\nodepath'_{\geq j})\cup \pindex_p'$, and $\pindex(\nodepath_j) = \pindex(\nodepath'_j)\cup \pindex_p'$. Thus, we have $\pindex(\nodepath_{\leq j})\setminus \pindex(\nodepath_j)\subseteq \pindex(\nodepath'_{\leq j})\setminus \pindex(\nodepath'_j)$ and $\pindex(\nodepath_{\geq j})\setminus \pindex(\nodepath_j)\subseteq \pindex(\nodepath'_{\geq j})\setminus \pindex(\nodepath'_j)$. It follows from this that
  \begin{equation}
    C_{M-\pindex(\nodepath_j)}\left( \pindex(\nodepath_{\leq j})\setminus \pindex(\nodepath_j) \right) \subseteq C_{M-\pindex(\nodepath'_j)}\left( \pindex(\nodepath'_{\leq j})\setminus \pindex(\nodepath'_j) \right).
  \end{equation}
  Putting this together, we find that 
  \begin{equation}
    C_{M-\pindex(\nodepath_j)}\left( \pindex(\nodepath_{\leq j})\setminus \pindex(\nodepath_j) \right)\cap \left( \pindex(\nodepath_{\geq j})\setminus \pindex(\nodepath_j) \right)\subseteq C_{M-\pindex(\nodepath'_j)}\left( \pindex(\nodepath'_{\leq j})\setminus \pindex(\nodepath'_j) \right)\cap \left( \pindex(\nodepath'_{\geq j})\setminus \pindex(\nodepath'_j) \right) = \emptyset,
  \end{equation}
  so we conclude that $\nodepath$ satisfies \eqref{eq:separting_tree_constraint} for all $1\leq j\leq k$, as required.

  \textbf{Merging preserves separating \DTree constraints.}
  Because merging is defined as a sequence of grafting and branching operations, it follows from the above and Lemma \ref{lem:grafting_valid} that merging preserves all \DTree constraints.
\end{proof}

\subsection{Uncertainty Index Estimation}

\subsubsection{Random Forest Regressor}
We use a random forest regressor as our conditional expectation estimator $\Ehat{\pindex}{\mathcal{S}_E}{\npset}$ throughout all examples. Specifically, we use the implementation of random forests in the \texttt{ranger} package \citep{ranger}. For simplicity, and to demonstrate that the usefulness of our estimator $\hat{U}^2$ does not depend on careful tuning of the underlying conditional expectation estimator, we use all of the default settings in \texttt{ranger}. These defaults are as follows.
\begin{enumerate}
  \item Number of trees: 500
  \item Fraction of variables considered at each split: 1/3
  \item Bootstrap method: full data set sampled with replacement
  \item Splitting criterion: node impurity (measured by variance)
\end{enumerate}

We refer the reader to \cite{ranger} for full details on its implementation of random forests.  

\subsubsection{Asymptotic Unbiasedness of Conditional Variance Estimator}
In this section, all expectations, variances, and distributions are implicitly conditional on $\data$. We suppress the explicit dependence on $\data$ for notational clarity. Recall that for $\pindex\subseteq\pindext$ and $\npset'\in \RR^{\pdim}$, we use $\Ehat{\pindex}{\mathcal{S}}{\npset'}$ to denote an estimator of the conditional expectation $\E\left[ \qoi\mid \npi=\npi' \right]$ based on a sample $\mathcal{S}=\{\nptotal^{(s)}\}_{s=1}^{ |\mathcal{S}| }$, where $\qoi = \nptotal_{i_{\qoi}}$. Given such an estimator, we then construct the data-splitting expected conditional variance estimator $\CVhat{\pindex}{\hat{E}}{\mathcal{S}}$ as follows. First, we partition the sample as $\mathcal{S}=\mathcal{S}_E\cup\mathcal{S}_V$, where $|\mathcal{S}_E|=|\mathcal{S}_V|$ and $\mathcal{S}_E\cap\mathcal{S}_V=\emptyset$ (assuming $|\mathcal{S}|$ even). Then our estimator is defined as
\begin{equation}
  \label{eq:var_est_repeat}
  \CVhat{\pindex}{\hat{E}}{\mathcal{S}} = \frac{2}{|\mathcal{S}|}\sum_{\npset^{(s)}\in \mathcal{S}_V} \left( \qoi^{(s)} - \Ehat{\pindex}{\mathcal{S}_E}{\npset^{(s)}} \right)^2.
\end{equation}
 
We now establish a sufficient condition for the estimator \eqref{eq:var_est_repeat} to be asymptotically unbiased.

\begin{lemma}\label{lem:var_est_unbiased}
Let $\mathcal{S}$ be an i.i.d. sample from $p(\nptotal)$. Suppose that, for $\pindex\subseteq\pindext$, the average mean squared error (MSE) of our conditional expectation estimator $\Ehat{\pindex}{\mathcal{S}_E}{\npset}$ vanishes in the sense that
\[
  \E\left[ \left( \Ehat{\pindex}{\mathcal{S}_E}{\npset} - \E\left[ \qoi\mid\npi \right] \right)^2 \right]=\E\left[ \E\left[ \left( \Ehat{\pindex}{\mathcal{S}_E}{\npset} - \E\left[ \qoi\mid\npi \right]\right)^2\;\Big\vert\; \npi \right] \right]\to 0 \text{ as } |\mathcal{S}|\to\infty,
\] 
where the inner expectation above is the point-wise MSE of $\Ehat{\pindex}{\mathcal{S}_E}{\npset}$ and the outer expectation averages over inputs $\npset$, with $\npset\ind \mathcal{S}$.
Then the data-splitting expected variance estimator \eqref{eq:var_est_repeat} is asymptotically unbiased for the true expected variance $\E\left[ \mathrm{Var}\left( \qoi\mid \npi \right) \right]$.
\end{lemma}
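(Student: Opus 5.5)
The plan is to compute the expectation of the estimator \eqref{eq:var_est_repeat} exactly and split it into the target quantity plus a nonnegative remainder equal to the average MSE. Write $m(\npi)=\E[\qoi\mid\npi]$. Since $\mathcal{S}_V$ has $|\mathcal{S}|/2$ elements, each i.i.d.\ from $p(\nptotal)$ and independent of $\mathcal{S}_E$, linearity of expectation gives
\[
\E\left[\CVhat{\pindex}{\hat{E}}{\mathcal{S}}\right]=\E\left[\left(\qoi-\Ehat{\pindex}{\mathcal{S}_E}{\npset}\right)^2\right],
\]
where $\npset$ is a single fresh draw independent of $\mathcal{S}_E$. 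The factor $2/|\mathcal{S}|$ exactly cancels the number of summands.

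Next, decompose $\qoi-\hat{E}=(\qoi-m(\npi))+(m(\npi)-\hat{E})$ and expand the square. The first squared term has expectation $\E[\V(\qoi\mid\npi)]$ by the tower property. The second squared term is exactly the average MSE in the hypothesis of the lemma.

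The cross term $2\E[(\qoi-m(\npi))(m(\npi)-\hat{E})]$ must vanish, and this is the one step requiring care. The plan is to condition on $(\npi,\mathcal{S}_E)$. Given these, the factor $m(\npi)-\Ehat{\pindex}{\mathcal{S}_E}{\npset}$ is fixed: $\hat{E}$ depends on $\npset$ only through $\npi$, as an estimator of a function of $\npi$. Because $\npset\ind\mathcal{S}_E$, the conditional law of $\qoi$ given $(\npi,\mathcal{S}_E)$ equals its law given $\npi$. Hence $\E[\qoi-m(\npi)\mid\npi,\mathcal{S}_E]=0$, and the cross term is zero. This is precisely where the sample splitting is used: without independence of $\mathcal{S}_E$ from the evaluation point, the cross term would produce the overfitting bias. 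Integrability is needed here, and I would assume $\E[\qoi^2]<\infty$ and finite MSE so that every term is well defined.

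Combining the pieces gives
\[
\E\left[\CVhat{\pindex}{\hat{E}}{\mathcal{S}}\right]=\E\left[\V(\qoi\mid\npi)\right]+\E\left[\left(\Ehat{\pindex}{\mathcal{S}_E}{\npset}-m(\npi)\right)^2\right].
\]
So the bias equals the average MSE, which is nonnegative and tends to $0$ as $|\mathcal{S}|\to\infty$ by hypothesis. This proves asymptotic unbiasedness, and it also confirms the remark in the main text that the estimator is biased upward by the expected MSE.
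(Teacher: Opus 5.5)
Your proof is correct and follows essentially the same route as the paper: expand $(\theta-\hat{E})^2$ around $\E[\theta\mid\psi_I]$, get $\E[\mathrm{Var}(\theta\mid\psi_I)]$ from the squared residual term by the tower property, use the sample-splitting independence $\mathcal{S}_E\ind\psi^{(s)}$ to kill the cross term, and identify the remaining term with the average MSE. Conditioning jointly on $(\psi_I,\mathcal{S}_E)$, as you do, is just a slightly more explicit form of the paper's step, which conditions on $\psi_I$ and factors the cross term by conditional independence.
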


\begin{proof}
  We take $\pindex\subseteq\pindext$ as fixed. We start by expanding each term in the conditional variance estimator:
  \begin{align*}
    \left( \qoi^{(s)} - \Ehat{\pindex}{\mathcal{S}_E}{\npset^{(s)}} \right)^2 = &\underbrace{\left( \qoi^{(s)} - \E\left[ \qoi\mid \npi=\npi^{(s)} \right] \right)^2}_{A} + \underbrace{\left( \Ehat{\pindex}{\mathcal{S}_E}{\npset^{(s)}} - \E\left[ \qoi\mid \npi= \npi^{(s)} \right] \right)^2}_{B}\\
    &\hspace{0.3in}- 2\underbrace{\left( \qoi^{(s)} - \E\left[ \qoi\mid \npi=\npi^{(s)} \right] \right)\left( \Ehat{\pindex}{\mathcal{S}_E}{\npset^{(s)}} - \E\left[ \qoi\mid \npi= \npi^{(s)} \right] \right)}_{C}.
  \end{align*}
  The expected value of term $A$ is exactly the estimand. Specifically:
  \begin{align*}
    \E\left[ A \right] &= \E\left[ \E\left[ A\mid \npi=\npi^{(s)} \right]\right]\\
    &= \E\left[ \E\left[ \left( \qoi^{(s)} - \E\left[ \qoi\mid \npi=\npi^{(s)} \right] \right)^2 \;\Big\vert\; \npi = \npi^{(s)} \right] \right]\\
    &= \E\left[ \mathrm{Var}\left( \qoi\mid\npi \right) \right].
  \end{align*}

  Next, we observe that term $C$ vanishes in expectation. Specifically, since $\mathcal{S}_E$ is independent of $\npset^{(s)}\in \mathcal{S}_V$, it follows that
  \[
    \left( \qoi^{(s)} - \E\left[ \qoi\mid \npi=\npi^{(s)} \right] \right) \ind \left( \Ehat{\pindex}{\mathcal{S}_E}{\npset^{(s)}} - \E\left[ \qoi\mid \npi= \npi^{(s)} \right] \right)\;\Big\vert\; \npi^{(s)}.
  \]
  Therefore, iterating expectations and applying the above conditional independence, we can write the expected value of $C$ as
  \begin{align*}
    &\E\left[ \left( \qoi^{(s)} - \E\left[ \qoi\mid \npi=\npi^{(s)} \right] \right)\left( \Ehat{\pindex}{\mathcal{S}_E}{\npset^{(s)}} - \E\left[ \qoi\mid \npi= \npi^{(s)} \right] \right) \right]\\
    &= \E\left[ \underbrace{\E\left[ \left( \qoi^{(s)} - \E\left[ \qoi\mid \npi=\npi^{(s)} \right] \right) \;\Big\vert\;\npi=\npi^{(s)} \right]}_{C'}\E\left[ \left( \Ehat{\pindex}{\mathcal{S}_E}{\npset^{(s)}} - \E\left[ \qoi\mid \npi= \npi^{(s)} \right] \right)\;\Big\vert\;\npi=\npi^{(s)} \right] \right].
  \end{align*}
  Now the factor $C'$ is just
  \[
  C' = \E\left[\qoi\mid \npi=\npi^{(s)}\right] - \E\left[\qoi\mid \npi=\npi^{(s)}\right] = 0,
  \]
  so in fact $\E\left[ C \right] = 0$.

  Finally, $\E\left[ B \right]$ is exactly the average mean squared error of the estimator $\Ehat{\pindex}{\mathcal{S}_E}{\npset^{(s)}}$ over inputs $\npset^{(s)}$. If this average MSE vanishes as $|\mathcal{S}|\to\infty$, then the above shows that 
  \begin{align*}
    \E\left[ \CVhat{\pindex}{\hat{E}}{\mathcal{S}} \right] &= \frac{2}{|\mathcal{S}|}\sum_{\npset^{(s)}\in \mathcal{S}_V} \E\left[ \left( \qoi^{(s)} - \Ehat{\pindex}{\mathcal{S}_E}{\npset^{(s)}}\right)^2 \right]\\
    &= \E\left[ \mathrm{Var}\left( \qoi\mid\npi \right) \right] + \E\left[ B \right]\\
    &\to \E\left[ \mathrm{Var}\left( \qoi\mid\npi \right) \right],
  \end{align*}
  as claimed.
\end{proof}

In practice, most practical estimators $\hat{E}$ will \textit{not} have asymptotically vanishing average MSE, and Lemma \ref{lem:var_est_unbiased} will not apply exactly. Instead, any residual bias in the estimator $\hat{E}$ will bias our estimate of the expected variance $\E\left[ \mathrm{Var}(\qoi\mid\npi) \right]$, and of the uncertainty index, upwards. In this sense, the estimator \eqref{eq:var_est_repeat} is conservative: It may yield ``false negatives'' by underestimating the uncertainty reduction associated with a set $\npi$, but it is unlikely to yield ``false positives''.

\subsection{Causal Latent Factor Model Details}\label{sec:lfm_supp}

This section provides all details of the model specification for the latent factor model used in the example of Section 3, as well as a description of how the posterior predictive data were generated for a new unit and at new time points.

\subsubsection{Model Specification}

The latent factor model in Section \ref{sec:ex1_sc} is a model of panel data $\datam \in \RR^{N\times T}$. Recall that the factor model decomposes this data as
\begin{equation}
  \label{eq:LFM_supp}
  \datam = \lfload\lffac + \effect + \epsilonv,
\end{equation}
where $\lffac$ and $\lfload$ are the latent factors and corresponding loadings, $\effect$ is the matrix of treatment effects (zero for all untreated units), and $\epsilonv$ the matrix of independent, idiosyncratic normal errors. We now give detailed structural and distributional assumptions for these parameters.

Most importantly, the factor loadings $\lfload\in \RR^{N\times K}$ are assumed to be lower triangular with nonnegative diagonal. This assumption is made to avoid nonidentification of the parameters $\lfload$ and $\lffac$, which otherwise occurs due to the fact that $\datam$ depends on these parameters only through their product $\lfload\lffac$ \citep{bayes_lfm_id}. Critically, this identifying assumption does \textit{not} constrain the product $\lfload\lffac$. 

We now state our parametric and distributional assumptions. 
For each unit $1\leq i\leq N$, the factor loadings $\lfloadc_{i\cdot}$ control both the scale of $y_{i\cdot}$ (through the norm $\|\lfloadc_{i\cdot}\|$), and the relative importance of each latent factor to $y_{i\cdot}$ (through the relative square magnitudes of the components $\lfloadc_{ij}^2/\|\lfloadc_{i\cdot}\|^2$). We parametrize $\lfload$ to allow specifying priors for these two components separately. Specifically, we decompose $\lfload$ as
\begin{equation*}
  \lfload = \lfscale\lfcorfac,
\end{equation*}
where: 
\begin{itemize}
  \item $\lfscale\in\RR^{N\times N}$ is a diagonal matrix with $\lfscalec_{ii}\stackrel{\mathrm{def}}=\sigma'_i>0$ controlling the scale of unit $y_{i\cdot}$.
  \item $\lfcorfac\in\RR^{N\times K}$ is the Cholesky factor of a correlation matrix, i.e. a lower diagonal matrix with positive diagonal elements (i.e. $\lfcorfacc_{kk}>0$ for $1\leq k\leq K$) and unit vector rows (i.e. $\|\lfcorfacc_{i\cdot}\|=1$ for $1\leq i\leq N$). Note that the squared components of each row sum to $1$: $\sum_{k=1}^K\left[ \lfcorfacc_{ik} \right]^2$. Thus, these squared components express the (standardized) unit $y_{i\cdot}$ as a weighted average of the latent factors.
\end{itemize}

For the scale parameters $\sigma'_i$, we assign priors
\begin{equation*}
  \sigma'_i \stackrel{iid}{\sim}\mathrm{InverseGamma}\left( 11, 10s \right),
\end{equation*}
where $s$ is a fixed scale hyperparameter which we set to the mean standard deviation of the units $y_i$. This prior has mean $s$ and standard deviation $s/3$.

Subject to the lower triangular and positive diagonal constraints, we take each row of the matrix $\lfcorfac$ to have a uniform prior. Since the rows must be unit vectors, each row is constrained to a compact space, and thus a uniform distribution is a proper prior.

The $K$ latent factors are then assigned independent AR(1) process priors with autocorrelations $\rho_k$ and stationary variances fixed to be $1$:
\begin{gather*}
  \lffacc_{k1} \sim \mathrm{normal}\left( 0, \sqrt{1 - \rho_k^2} \right) \text{ for } 1\leq k\leq K,\\
  \lffacc_{kt}\sim \mathrm{normal}\left( \rho_k\lffacc_{k(t-1)}, \sqrt{1 - \rho_k^2} \right) \text{ for } 1\leq k\leq K \text{ and } 2\leq t\leq T\\
  \rho_k \stackrel{iid}{\sim} \mathrm{uniform}(0, 1).
\end{gather*}

Finally, the nonzero treatment effects $\effects_t$ and idiosyncratic errors $\epsilonv$ receive centered normal priors:
\begin{gather*}
  \effects_t\sim\mathrm{normal}\left( 0, \sigma_{\effects} \right)\text{ for } 1\leq t\leq T-t_{\delta}\\
  \epsilonc_{it} \sim \mathrm{normal}\left( 0, f\times \sigma'_i \right)\text{ for } 1\leq i\leq N\text{ and } 1\leq t\leq T,
\end{gather*}
where $\sigma'_i$ is the $i^{\mathrm{th}}$ diagonal element of the scale matrix $S$ in our decomposition of $\lfload$, and $f$ is a fixed scaling hyperparameter. This parameterization implies that $\sigma'_i$ control the \textit{overall} scale of $\data_{i\cdot}(0)$ (combining the latent and idiosyncratic components), and $f$ controls the relative size of the idiosyncratic component. We set $f$ to $0.025$ to express the prior belief that the idiosyncratic component should typically be no more than $5\%$ of the scale of the latent component. Choosing a small $f$ is beneficial for identification, since either the latent or idiosyncratic component can account for a majority of the variation in the data absent some constraint.

\subsubsection{Posterior Predictive Sampling}

In order to generate hypothetical data at earlier time points and for an additional unit, we proceed as follows. First, we sample each of the latent factors at $T^*$ earlier time points. Because the AR(1) prior for the latent factors $\lffac$ is stationary and Gaussian, it is time-reversible. Therefore, we can sample the latent factors at earlier times by simply iteratively applying the autoregressive equations in reverse. Specifically, let $\lffacnew\in\RR^{K\times T^*}$ represent the latent factors at the earlier times in reverse temporal order (i.e. with $\lffacnewc_{\cdot 1}$ immediately preceding $\lffacc_{\cdot 1}$ in time). Then, for each $1\leq k\leq K$, we simulate $\lffacnewc_{k\cdot}$ as:
\begin{gather}
  \label{eq:lffac_earlier_sample}
  \lffacnewc_{k1} = \rho_k\lffacc_{k1} + \xi_{k1},\\
  \lffacnewc_{kt} = \rho_k\lffacnewc_{k(t-1)} + \xi_{kt}\text{ for } 2\leq t\leq T^*,\\
  \xi_{kt} \stackrel{iid}{\sim} \mathrm{normal}\left( 0, \frac{1}{\sqrt{1-\rho_k^2}} \right),\\
  (\lffacc_{k1}, \rho_{k}) \sim p(\lffacc_{k1}, \rho_k\mid \datam).
\end{gather}
Hypothetical data for the $N$ observed units are then generated at these earlier times according to the latent factor model \eqref{eq:LFM_supp}. For each unit $1\leq i\leq N$ and time $1\leq t\leq T^*$, the hypothetical data (also in reverse chronological order) is given by:
\begin{gather}
  \label{eq:earlier_units}
  \datapot_{i t} = \lfloadc_{i \cdot}\lffacnewc_{\cdot t} + \epsilon_{i t},\\
  \epsilon_{i t}\sim\mathrm{normal}\left( 0, f\times \sigma_i \right)\\
  (\lfloadc_{i\cdot}, \sigma_i) \sim p(\lfloadc_{i\cdot}, \sigma_i\mid \datam).
\end{gather}
Note that the treatment effect term is omitted since all times at which we generate hypothetical data are pre-treatment times.

Finally, to generate posterior predictive samples for the new unit, we first sample new loadings from the prior distribution:
\begin{equation*}
  \begin{split}
    \lfloadnewc_{\mathrm{new}} &= \overline{\sigma} \times u,\\
    \overline{\sigma} &= \frac{1}{N}\sum_{i=1}^N\sigma_i\\
    u &\sim \mathrm{uniform}\left( S^{k-1} \right),
  \end{split}
\end{equation*}
i.e. we multiply the average unit scale by a uniformly distributed unit vector in $\RR^K$;
Given these, the new unit is simulated at each time point $1 \leq t\leq T+T^*$ according to
\begin{equation*}
  \begin{split}
    \datapot_t &= \lfloadnewc_{\mathrm{new}}^T\lffacnewc_{\cdot t} + \epsilonc'_t\;\text{ for } 1\leq t\leq T^*,\\
    \datapot_{T^*+t} &= \lfloadnewc_{\mathrm{new}}^T\lffacc_{\cdot t} + \epsilonc'_t\; \text{ for } 1\leq t\leq T,\\
  \epsilonc'_t &\sim \mathrm{normal}\left( 0, f\times \overline{\sigma} \right),\\
  \left( \lffac,\lffacnew,\overline{\sigma} \right) &\sim p\left(  \lffac,\lffacnew,\overline{\sigma}\mid \datam \right).
  \end{split}
\end{equation*}

\subsection{Polling Model Details}\label{sec:polling_supp}

For full details of the polling model analyzed in section 4, including discussion of modeling choices and parametrization, see \cite{gelman_polling}.

\subsection{Almost-Sparse Markov Graphs}\label{sec:almost_sparse_supp}

The examples presented in this paper all involve models with a Markov graph $M_{\mathcal{F}}$ that is sufficiently sparse to have useful local structure, which can then be visualized using our proposed trees. Many commonly encountered models lack this pleasant property, which limits the usefulness of our visualization technique as defined in the main text. In this section, we describe this problem in greater detail and introduce a simple extension of our method which can be fruitfully applied to ``almost'' sparse graphs (to be adequately defined below).

For our purposes, a Markov graph is sparse if every vertex (corresponding to a single, scalar unknown quantity $\npscalar$) is connected to only a few other vertices. In a non-sparse graph, there will be at least one vertex that is connected to a large fraction of the other vertices -- potentially all other vertices. The existence of such highly-connected vertices tends to shorten the shortest paths between any two vertices.

It follows from our minimal separator construction \ref{eq:minimal_separators} that the length of a branch in a \DTree generated by Algorithm 2 is upper bounded by the length of any shortest path connecting the root vertex to a vertex in the leaf node. Consequently, when applied to non-sparse Markov graphs, Algorithm 2 will tend to generate shallow trees with fewer nodes, limiting our ability to discern relationships among the quantity of interest and other unknown quantities.

In some cases, this problem is unavoidable. However, in many cases, a Markov graph may be ``almost'' sparse in the sense that a majority of the vertices have few neighbors, while a few vertices have many neighbors. We refer to the former as local vertices and the latter as global vertices. In this case, the global vertices may block our ability to discern useful local connections among the local vertices. A natural solution in cases like this is to condition away the global vertices.

Specifically, suppose that we have divided all unknowns into local parameters $\npset^L = \npset_{\pindex_L}$ and global parameters $\npset^G = \npset_{\pindex_G}$, with
\[
  \pindex_{L}\cup \pindex_{G} = \pindext,\;\pindex_L\cap\pindex_G=\emptyset,\;\text{and}\; |\pindex_G|\ll|\pindext|.
\]
Then our posterior distribution is $p(\npset^L, \npset^G)$ (suppressing dependence on the data for notational simplicity). If $p(\npset^L,\npset^G)$ factorizes over $\mathcal{F}$, then it is easy to see that $p(\npset^L\mid \npset^G)$ factorizes over
\[
  \mathcal{F}' = \left\lbrace f\cap \pindex_L \;\Big\vert\; f\in \mathcal{F} \right\rbrace,
\]
and that the corresponding Markov graph is $M_{\mathcal{F'}} = M_{\mathcal{F}}-\pindex_{G}$ (i.e. the graph obtained from $M$ by deleting all vertices in $\pindex_G$). Consequently, if $\pindex_G$ corresponds to the global vertices in $M_{\mathcal{F}}$, then $p(\npset^L\mid\npset^G)$ will have a sparse Markov graph $M_{\mathcal{F}}-\pindex_G$, to which we can then fruitfully apply Algorithm 2. 

This strategy leads to a simple extension of Algorithm 2. Let $\mathrm{PostTree}\left( M, \qoi, \{\npsetl_i\}_{i=1}^l, \gamma \right)$ denote the procedure defined in Algorithm 2, which outputs a \DTree $T$. Let $\node^{\mathrm{int}}(T)$ denote the internal nodes of $T$, i.e. all non-root and non-leaf nodes. Algorithm \ref{alg:tree_alg_global} gives the modified procedure. In words, we simply apply Algorithm 2 to $M-\pindex_G$, and then modify the resulting \DTree by inserting the global indices $\pindex_G$ at each internal node in $\node^{\mathrm{int}}(T)$. It is easy to see that the resulting tree $T$ is still a separating tree with respect to the original Markov graph $M$.

\begin{algorithm}
  \begin{algorithmic} \onehalfspacing
  \Require Undirected graph $M$, root quantity $\qoi$, leaf nodes $\{\nodel_i\}_{i=1}^l$, threshold $\gamma\in(0,1)$, global indices $\pindex_G$.
  \State $M' \gets M-\pindex_G;$
  \State $T \gets \mathrm{PostTree}\left( M', \qoi, \{\nodel_i\}_{i=1}^l, \gamma \right);$
  \For{$\node\in \node^{\mathrm{int}}(T)$}
    \State $\pindex(\node) \gets \pindex(\node) \cup \pindex_G$;
  \EndFor
  \State \Output $T$
  \end{algorithmic}
  \caption{\DDTree Construction with Global Parameters}
  \label{alg:tree_alg_global}
\end{algorithm} 

Algorithm \ref{alg:tree_alg_global} provides a theoretical solution to the problem of finding useful \DTrees in non-sparse models, but two practical problems remain. The first is the problem of how to specify the global parameters $\npset^G$. As the statement of Algorithm \ref{alg:tree_alg_global} indicates, this is entirely user specifiable. While the concept of a global parameter was not given a precise definition, in most cases, we expect there to be a fairly clear delineation between the parameters with ``few'' neighbors and the parameters with ``many''. In such cases, simple heuristics can identify the candidate global parameters automatically from the graph $M$.

The second problem arises when $\pindex_G$ is large. For any internal node $\node$ in $T$, we will have $\pindex_G\subset \pindex(\node)$, so the size of $\pindex_G$ lower bounds the size of the node index sets $\pindex(\node)$. In this case, the size of $\pindex_G$ can make the difference between $\hat{U}_{\pindex(\node)}$ being tractable and intractable to compute. By constructing minimal separating sets at each internal node, Algorithm \ref{alg:sensitivity_tree_from_chains_simple} was designed to suppress the size of the node index sets $\pindex(\node)$. But by injecting $\pindex_G$ in each internal node, Algorithm \ref{alg:tree_alg_global} lacks this virtue, and may thus reasonably be seen as only a partial solution to the problem.

When $\pindex_G$ is large, another possible solution is to re-run the original posterior inference algorithm (e.g. Markov chain Monte Carlo, as used in our examples) conditioning on a particular value of $\npset^G$. In this case, we obtain samples directly from $p(\npset^L\mid\npset^G)$, and can thus drop $\pindex_G$ from the computation of the $\hat{U}_{\pindex(\node)}$. The result, however, is only an approximation to the output of Algorithm \ref{alg:tree_alg_global} since, in this case, we only condition on a single value of $\npset^G$ (rather than taking an expectation over the posterior).

\end{document}